\newtheorem{theorem}{Theorem}
\newtheorem{lemma}{Lemma}
\newcommand{\EX}[1]{\mathbb{E}\left[ #1 \right]}
\newcommand{\CEX}[2]{\EX{#1 \middle| #2}}
\newcommand{\PR}[1]{\mathbb{P}\left[ #1 \right]}
\newcommand{\CPR}[2]{\PR{#1 \mid #2}}
\newcommand{\as}{\quad \textit{a.s.}}
\newcommand{\IND}[1]{1_{\{#1\}}}
\newcommand{\one}{\mathbf{1}}
\newcommand{\norm}[1]{\left\lVert#1\right\rVert}
\newcommand{\E}{\mathrm{e}}
\newcommand{\SRADIUS}[1]{\rho\left(#1 \right)}
\newcommand{\algname}{\textit{RoamingToken}}
\newcommand{\test}{s}
\newcommand{\cmt}[1]{\ifCLASSOPTIONdraftcls\footnote{#1}\fi}
\DeclareMathOperator{\var}{var}
\DeclareMathOperator{\eig}{eig}
\DeclareMathOperator{\degree}{degree}
\newtheorem{assumption}{Assumption}
\begin{document}

\title{Distributed Estimation Via a Roaming Token}

\author{Lucas~Balthazar,
        João Xavier,
        and~Bruno~Sinopoli~
\thanks{Lucas Balthazar is with the Institute for Systems and Robotics (ISR), Instituto Superior Técnico (IST), Technical University of Lisbon, Lisbon, Portugal. He is also with the Department of Electrical and Computer Engineering, Carnegie Mellon University, Pittsburgh, PA, USA (e-mail: lbalthaz@andrew.cmu.edu).}
\thanks{João Xavier is with the Institute for Systems and Robotics (ISR), Instituto Superior Técnico (IST), Technical University of Lisbon, Lisbon, Portugal (e-mail: jxavier@isr.tecnico.ulisboa.pt).}
\thanks{Bruno Sinopoli is with the Department of Electrical and Computer Engineering, Carnegie Mellon University, Pittsburgh, PA, USA (e-mail: brunos@ece.cmu.edu).}
\thanks{The work of L. Balthazar, J. Xavier and B. Sinopoli was work was supported in part by Fundação para a Ciência e Tecnologia (the Portuguese Foundation for Science and Technology) through the Carnegie Mellon Portugal Program under Projects UID/EEA/50009/2013 and SFRH/BD/52533/2014.}}


{}

\maketitle

\begin{abstract}
We present an algorithm for the problem of linear distributed estimation of a parameter in a network where a set of agents are successively taking measurements. The approach considers a roaming token in a network that carries the estimate, and jumps from one agent to another in its vicinity according to the probabilities of a Markov chain. When the token is at an agent it records the agent's local information. We analyze the proposed algorithm and show that it is consistent and asymptotically optimal, in the sense that its mean-square-error (MSE) rate of decay approaches the centralized one as the number of iterations increases. We show these results for a scenario where the network changes over time, and we consider two different set of assumptions on the network instantiations: they \textit{i.i.d.} and connected on the average, or that they are deterministic and strongly connected for every finite time window of a fixed size. Simulations show our algorithm is competitive with \textit{consensus+innovations} type algorithms, achieving a smaller MSE at each iteration in all considered scenarios. 
\end{abstract}

\section{Introduction}

\IEEEPARstart{T}{his} paper considers the problem of distributed estimation. In a typical setting, many agents are deployed in a region and are interested in estimating a parameter from their private measurements. These agents could represent a wireless sensor network, a fleet of mobile robots, or multiple devices connected in a internet-of-things setup. Due to physical constraints agents are restricted to communicating only with a subset of other agents, and can be seen as forming a network of interconnected entities. Further, the agents typically have computational capabilities and can be used for more than simply measuring and relaying information.  The main feature of \textit{peer-to-peer} algorithms of this kind is the absence of a central node, and typically the agents' capabilities are exploited so as to perform the estimation procedure in parallel while using the network to share their intermediate computations.




Early work considered networks of agents where each makes only one measurement. In the standard case, the problem reduces to the one of computing the average of the measurements, and the usual approach has been to use consensus algorithms to perform the measurement fusion, for example as in \cite{xiao2005scheme}. Later on the case where agents are continuously making measurements has been considered. Examples of these include the \textit{consensus+innovations} algorithms, as presented and studied in \cite{kar2011convergence,kar2013distributed,kar2013consensus+}, and the diffusion least-mean square algorithms, as in \cite{lopes2008diffusion,cattivelli2010diffusion}. Both of these include a consensus-like step, where agents cooperate with their neighbors by sharing their current estimate and performing a convex combination of the nearby estimates, and a innovation step where agents use exclusively their local information to update their current estimate. 

In the \textit{consensus+innovations} type algorithms the two steps occur at different timescales, with different diminishing parameters. The algorithm is shown to be asymptotically efficient, in that it asymptotically has the same mean square error decay rate as the estimator obtained by an oracle having access to all measurements at all time instants. In the diffusion algorithms, constant step sizes are used instead, which yield worse asymptotic performance but gives the network the ability to react to changing statistics of the parameters.

Other types of algorithms proposed for distributed estimation include those of the incremental type, for example as in \cite{lopes2007incremental, ram2007stochastic, li2010distributed}. In these references, only one agent is active at each time instant. That agent will take a measurement of the environment, update the estimate, and send the estimate to one of its neighbours, which will then proceed in the same fashion. Thus there is only one estimate at each instant of time, localized in the agent that is currently active. The algorithm requires an initial setup time, where the agents communicate between themselves so as to find a cycle in the network, which will determine the path the estimate makes as it is being passed between the agents. 

Another approach of the incremental type is that of \cite{johansson2007simple,johansson2009randomized, ram2009incremental}. In \cite{johansson2009randomized} it is presented an iterative algorithm to incrementally update a subgradient method for distributed optimization. In this case, the variable of interest does not follow a fixed path, but instead is passed from agent to agent in a randomized way, according to probabilities that follow a Markov chain. This was later extended in \cite{ram2009incremental} to consider subgradient updates that are affected by random noise, and with this extension the algorithm can be applied to an estimation setting.  

In this paper we propose a different type of distributed algorithm for linear estimation. We consider a setting similar to \cite{kar2011convergence} in terms of the general measurement model of the agents, and in particular we consider that at each time instant, every agent is active taking a measurement of the environment. However, we do not use a consensus step to fuse the different agents estimate, and instead we follow an approach that is closer to \cite{johansson2009randomized, ram2009incremental}, where a token travels the network carrying the estimate. In the same way as it is done in \cite{johansson2009randomized}, the token is passed to a neighbor according to the probabilities of a Markov chain, and communication happens in a directed fashion. When the token is at an agent, the estimate is updated with that agent local information. A novelty of our algorithm is that contrary to a typical distributed estimation algorithms, the agents will not keep an estimate of the quantity of interest that is updated at each iteration, and instead they keep a variable that is updated at each iteration with only local information, as they wait for the token to arrive. Thus our algorithm is neither an incremental algorithm nor a consensus-type algorithm. 

In our algorithm, the most up-do-date estimate at a particular time instant is located in the agent that carries the token at that instant. As a possible intentional application, consider for example a situation where the agents represent a sensor network that is tasked with performing an action depending on the parameter, and any agent can perform such action\cmt{Need to edit this paragraph in view of comment: 'I liked this justification but find it too vague. Refine this example; go to specifics; make up a plausible scenario anchored in a usual framework (e.g., cognitive radio?)'}. In this case it is enough that one agent has at a given time instant the estimated value of the parameter, as our algorithm guarantees. Alternatively, each agent can save the estimate from the last time it was visited by the token, and use that as its current estimate. 

\textit{Contributions:} We present a novel algorithm for distributed linear estimation and show that it is asymptotically optimal, in the sense that as the number of iterations grows to infinity, the decay rate of the mean square error (MSE) of our estimate approaches the optimal rate of decay. We show this result considering a directed communication model, and considering two different assumptions on the network connections instantiations: (I) we consider they are \textit{i.i.d} and strongly connected on the average, or (II) that they are deterministic with the property that the union of the networks at each time window of a fixed size is strongly connected. We believe this is the first distributed algorithm for estimation that guarantees an optimal asymptotic rate of decay for the MSE under scenario (II).  Under scenario (I), we believe it is the first algorithm with directed communication that guarantees this, and we also show via simulations that our algorithm can outperform the \textit{consensus+innovations} algorithm in a number of test cases. 

The rest of this paper is organized as follows. In section \ref{sec:model} we present our model and the underlying assumptions. In section \ref{sec:particle} we present our algorithm and a possible implementation of it. Section \ref{sec:proof} is devoted to showing the consistency and asymptotic optimality of the proposed algorithm when we consider that the network instantiations are \textit{i.i.d}. Section \ref{sec:simulation} provides some numerical experiments and comparison with a \textit{consensus+innovations} type algorithm. Section \ref{sec:networkB} generalizes the theoretical results previously obtained and show how they can be applied to a more general setting, and we focus on one where the network instantiations are deterministic and strongly connected at each finite time window of a fixed size.


\textit{Notation:} Capital letters are used to denote matrices, sets, and events. For a matrix $A$, $A_{ij}$ is used to denote the entry in the $i$-th row and $j$-th, $A^\top$ is the transpose, $\SRADIUS{A}$ is the spectral radius. $I$, $e_i$, $\one$, are respectively, the identity matrix, the $i$-th coordinate vector and a vector with all entries equal to $1$, with size given by context. When a specific size is meant a subscript is added, thus $I_k \in \mathbb{R}^{k \times k}$ and $\one_k \in \mathbb{R}^k$. We use $\IND{E}$ to denote the indicator random variable for event $E$, thus $\IND{E} = 1$ if and only if the event $E$ occurred. $\norm{v},\norm{A}$ are respectively the $2-$norm of vector $v$ and induced $2-$norm of matrix $A$. For a set $S$, $|S|$ denotes its cardinality. For a random variable $x$ we denote by $\EX{x}$ its expected value.

 

\section{Model and assumptions}
\label{sec:model}

We consider a set of agents $\mathcal{V}$, with $|\mathcal{V}| = n$. The variable $t$ indicates time, which is measured in discrete increments, with the agents synchronized at each time step. At every $t = 0,1,\ldots$ agent $i$ takes a measurement 
\begin{equation}
\label{eq:measurement}
y_i(t) = H_i \theta + w_i(t) 
\end{equation}
where $H_i$ is the $i$-th agent observation matrix, $\theta$ is the parameter we are trying to estimate, and $w_i(t)$ is the noise process that affects the measurement. We note that $y_i(t)$, $\theta$ and $w_i(t)$ are vectors of sizes consistent with equation \eqref{eq:measurement}. We define $y(t)^\top = [y_1(t)^\top \ldots y_n(t)^\top ]$, $H^\top = [H_1^\top \ldots H_n^\top ]$, $w(t)^\top = [w_1(t)^\top \ldots w_n(t)^\top ]$. Then the global measurement model at time $t$ can be written as
\begin{equation}
y(t) = H \theta + w(t).
\end{equation}
We consider two assumptions on the noise sequence and observation matrix $H$. 
\begin{assumption}
\label{asp:noise}
The noise sequence $(w(t))_{t\geq 0}$ is independent, spatially uncorrelated $\EX{w_k(t) w_l(t)^\top} = 0,$ for $l \neq k$, zero mean $\EX{w(t)} = 0$, and with covariance $\EX{w_i(t)} = C_i$, for all $t$. 
\end{assumption}
\begin{assumption}
\label{asp:invert}
The matrix $H^\top H$ is invertible.
\end{assumption}
\noindent We note that time independence is usually assumed in most works in distributed estimation, in particular in all works mentioned in the introduction\cmt{Still need to comment with regards to spatially uncorrelated. It's a bit messy because consensus+innovations can deal with spatially correlated noise, but it requires optimal update matrices, and so far there is no way to compute it in a distributed setup. When there are solutions to compute (distributively) the optimal update matrices, they assume the noise is uncorrelated, \textit{e.g.} in \cite{kar2013distributed}. When the noise is correlated between different sensors, then consensus+innovations guarantees that the estimate is asymptotically normal and evolves as for example $\frac{G}{t}$ where $G$ is (without global knowledge) a non-optimal matrix. Thus I can't say that there are no works that assume the noise to be spatially correlated, but at the same time, such works have no viable distributed implementation.}. Assumption \ref{asp:invert} is necessary to guarantee that even the centralized problem of estimation is well posed. 

A classic problem in estimation is the determination of $\theta$ after $t$ measurements have been obtained. Under assumptions \ref{asp:noise} and \ref{asp:invert} the best linear estimator for $\theta$ can be found as
\begin{equation}
\label{eq:central_estimator}
\hat{\theta}_c(t) = \left(\sum_{i=1}^n H_i^\top C_i^{-1} H_i\right)^{-1} \sum_{i=1}^n H_i^\top C_i^{-1} \overline{y}_i(t)
\end{equation}
where $\overline{y}_i(t) = \frac{1}{t}\sum_{s=0}^t y_i(s)$. If the noise follows a Gaussian distribution, this estimator is an efficient estimator of $\theta$ as it achieves the Cramér-Rao lower bound for any $t$, and thus it is the minimum variance unbiased estimator. For a generic distribution, it can be shown to be the linear estimator for $\theta$ with minimum variance among all linear estimators. These are known results from estimation theory and we omit the details here for conciseness (see for example \cite{kay1993fundamentals}). Because of the properties listed, estimator $\hat{\theta}_c$ is a desirable estimator in a wide number of situations, and thus its extension to a distributed setting is of great interest.

The estimator \eqref{eq:central_estimator} we call the central estimator of $\theta$, as it is a desirable estimator which depends at time $t$ on all measurements of all agents. Its covariance is given by $\var(\hat{\theta}_c(t)) = \frac{\Sigma_c^{-1}}{t}$, where $\Sigma_c$ is the Fisher information matrix for Gaussian noise $$\Sigma_c = \sum_{i=1}^n H_i^\top C_i^{-1} H_i.$$ In a distributed setting, agents do not have instantaneous access to all the measurements obtained by the network, and so a distributed estimator will generally have a higher variance than the central estimator at any time instant $t$. A question that can be asked is if distributed estimators can achieve at least asymptotically the mean-square-error rate of decay of the central estimator, that is if for our distributed estimate $\hat{\theta}(t)$ we can guarantee that
\begin{equation}
    \label{eq:asymptoticallyoptimal}
    \lim_{t \rightarrow +\infty} t \EX{ (\hat{\theta}(t) - \theta)(\hat{\theta}(t) - \theta)^\top} = \Sigma_c^{-1}.
\end{equation}
For the sake of not being verbose we will call an estimator that verifies \eqref{eq:asymptoticallyoptimal} asymptotically optimal. It is well known that distributed estimators of the \textit{consensus+innovations} type \cite{kar2011convergence,kar2013distributed,kar2013consensus+} are asymptotically optimal \cmt{I believe they are the only distributed algorithms for estimation that have been shown to be asymptotically optimal?}. One of the goals of this paper is to show that our proposed algorithm also guarantees this, considering the setting where the network connections vary with time, as long as they satisfy some mild assumptions that are also typically encountered in the literature. 

\paragraph*{A simple distributed algorithm} In order to motivate our algorithm, we present first a simple procedure that allows for the estimation of $\theta$ in a distributed setting when we consider a connected and static network. We assume the following.

\begin{assumption}
\label{asp:distributed}
Each agent knows its measurement matrix $H_i$ and noise covariance matrix $C_i$.
\end{assumption}

If assumption \ref{asp:distributed} holds, we can task each agent with computing the local variable
\begin{equation}
x_i(t) = H_i^\top C_i^{-1} \overline{y}_i(t)
\end{equation}
and keep it updated at every time instant. Looking at the expression in \eqref{eq:central_estimator} we could reproduce the central estimator by gathering the different $x_i(t)$ and left multiplying by the constant matrix $\Sigma_c^{-1}$. We suppose the network has been setup so as to form a cycle
and for the moment we also assume all the agents know the value of the constant matrix $\Sigma_c$. Then, a procedure to estimate $\theta$ can be as follows. At time $t=0$ agent $1$ sends $x_1(0)$ to agent $2$, which at time $t = 1$ sends $x_1(0)+x_2(1)$ to agent $3$, and so on. If we let $d(t)$ denote the quantity that is sent at time $t$, agent $n$ will compute at time $n-1$ the estimate $\hat{\theta}(n-1) = \Sigma_c^{-1} d(n-1)$, where $d(n-1) = x_1(0) + x_2(1) + \ldots + x_n(n-1)$. Continuing, at time $n$, agent $1$ receives $d(n-1)$ from agent $n$, updates it as $d(n) = d(n-1) - x_1(0) + x_1(n)$, and computes a new estimate. Thus, using this procedure, if we let $p(t)$ denote the agent where $d(t)$ is, so that $p(0) = 1, p(1) = 2, \ldots$, and we define
\begin{equation}
\label{eq:definitionTau}
\tau_i(t) = \sup\{s \leq t : p(s) = i\}
\end{equation}
we can then write 
\begin{equation}
\hat{\theta}(t) = \Sigma_c^{-1} \sum_{i=1}^n x_i(\tau_i(t)).
\end{equation}
We note, $t - n \leq \tau_i(t) \leq t$, and thus the estimate $\hat{\theta}(t)$ is not exactly the one in \eqref{eq:central_estimator}, but it gets close to it as the number of measurements increase, in a specific sense that we will make precise later.  

The procedure just described has some drawbacks when the distributed structure of the network is considered:
\begin{enumerate}
\item It requires us to establish a cycle in the network, which can be unfeasible for large networks. 
\item If we consider a scenario where communication links can change over time, requiring the communication structure to follow a cycle can introduce an excessive delay in the estimation procedure. 
\end{enumerate}
In this paper, we will present a novel algorithm based on this procedure that does not require a cycle to be formed, and takes into account the changing topology of the network. We will do this by letting the quantity $d(t)$ roam in the network, so that $p(t)$ has a random nature to it, which we denote by the token $p(t)$. As the token $p(t)$ visits an agent, it updates the quantity $d(t)$ with the agent's updated local variable $x_i(t)$, and is then passed to another one, according to the available communications at that time instant. Before making the description of our algorithm more precise, we introduce some assumptions about the network in which the agents communicate. 

\paragraph*{Network model} We will consider a dynamic setting, where the network connections change over time. For this purpose, we introduce a sequence of matrices $(A(t))_{t \geq 0}$ where each $A(t)$ is an adjacency matrix for a directed graph with $n$ nodes, $A(t) \in \mathcal{A}^n$, where
\begin{equation*}
\mathcal{A}^n = \{B \in \{0,1\}^{n \times n}: B_{ii} = 0,~ i = 1,\ldots,n\}.
\end{equation*}
The matrix $A(t)$ represents which communication links can be used at time $t$, and $(A(t))_{ij} = 1$ if and only if node $i$ can send information to node $j$ at time instant $t$. We allow for directed communication, thus $A(t)$ is possibly asymmetric and the row $i$ of the adjacency matrix corresponds to the outward edges from node $i$ that exist on the graph. For $A \in \mathcal{A}^n$, we denote $\mathcal{G}(A) = (\mathcal{V},\mathcal{E}(A))$ the corresponding graph, with $\mathcal{E}(A)$ the corresponding set of edges, and where $\mathcal{G}$ is such that $(i,j) \in \mathcal{E}(A)$ if and only if $A_{ij} > 0$. We consider first the following assumption on the sequence $A(t)$.
\begin{assumption}
\label{asp:networkA}
The sequence $(A(t))_{t \geq 0}$ is random, \textit{i.i.d.}, and independent of $(w(t))_{t \geq 0}$. Further, the graph $\mathcal{G}(\EX{A(t)})$ is strongly connected. 
\end{assumption}
\noindent
We recall that a graph is strongly connected if and only if for any $i,j \in \mathcal{V}$ there is a path connecting $i$ to $j$. Under assumption \ref{asp:networkA} we can let $\overline{\mathcal{A}} = \{A \in \mathcal{A}^n: \PR{A(t) = A} > 0\}$, and then we have
\begin{align*}
\mathcal{G}(\EX{A(t)}) & = \mathcal{G}\left( \sum_{A \in \overline{\mathcal{A}}} A \PR{A(t) = A} \right)\\ & = \left(\mathcal{V}, \bigcup_{A \in \overline{\mathcal{A}}} \mathcal{E}(A) \right), 
\end{align*} 
so that assumption \ref{asp:networkA} is equivalent to assuming that the sequence is \textit{i.i.d} and that the union of all graphs that have some probability of appearing in the sequence is strongly connected. We note that assumption \ref{asp:networkA} together with assumption \ref{asp:invert} are the same as the distributed observability requirement that is shown to be necessary to obtain asymptotic efficiency in \textit{consensus+innovations} algorithms \cite{kar2011convergence, kar2013consensus+, kar2013distributed}. For the sake of not overburdening the paper, we focus first exclusively on a network in which assumption \ref{asp:networkA} holds. In section \ref{sec:networkB} we explain how the results developed can be applied to the setting where instead of being random, the network instantiations are assumed to be deterministic with the property that they are strongly connected in each finite time window of a certain size.  

\paragraph*{Roaming token} We now make precise the nature of the roaming token $(p(t))_{t \geq 0}$. We will let $p(t) \in \mathcal{V}$, and $p(t)$ corresponds to the agent, or node, which currently has access to the gathered measurements variable $d(t)$. We will let $p(t+1)$ be chosen randomly, between the available connections at agent $p(t)$ at time $t$. Specifically, we will let the token move as a Markov chain, which at time $t$ has a transition matrix that is consistent with the available communication links of the graph. To make this precise, we consider a function $Q(.): \mathcal{A}^n \rightarrow \mathcal{Q} $ where $\mathcal{Q} = \{B \in [0,1]^{n \times n}: B 1 = 1 \}$ is the set of right stochastic matrices, and such that $Q(A)_{ij} > 0$ only if $A_{ij} > 0$. The function $Q(.)$ maps an adjacency matrix to a transition matrix of Markov chain. We will let
\begin{equation}
\CPR{p(t+1) = j}{p(t) = i, A(t) = A_t} = Q(A_t)_{ij}.
\end{equation}
With this construction we have $A(t)_{p(t),p(t+1)} = 1$, so that the token only moves in available connections.

We now present some results that will be useful later when we analyze the convergence property of our algorithm. If assumption \ref{asp:networkA} holds then $A(t)$ is an \textit{i.i.d.} sequence, and we can compute\cmt{justify that $A(t)$ and $p(t)$ are independent}
\begin{align*}
 \MoveEqLeft \CPR{p(t+1) = j}{p(t) = i}  \\
& = \sum_{A_k} \CPR{p(t+1) = j}{p(t) = i, A(t) = A_k} \\  
& \qquad\qquad\qquad \CPR{A(t) = A_k}{p(t) = i}  \\ 
& = \sum_{A_k} Q(A_k)_{ij} \PR{A(t) = A_k} = \EX{Q(A(t))}_{ij},
\end{align*}
so that $p(t)$ is a time invariant Markov chain, with transition matrix $\overline{Q} = \EX{Q(A(t))}$. 

We wish to study conditions under which the transition matrix $\overline{Q}$ is irreducible. Consider the following
\begin{assumption}
\label{asp:matrixQ}
The function $Q(.)$ is such that if $A_{ij} > 0$ then $Q(A)_{ij} \geq \delta > 0$. 
\end{assumption}
\noindent
An example of a function $Q(.)$ satisfying assumption \ref{asp:matrixQ} is the one which assigns to each non-zero entry in row $i$ the value ${\degree(i,A)}^{-1}$, where $\degree(i,A) = e_i^\top A \one$ is the outdegree of node $i$ according to $A$.

\begin{lemma}($\overline{Q}$ is irreducible)
\label{lem:irreducible}
Suppose assumptions \ref{asp:networkA} and \ref{asp:matrixQ} hold. Then the matrix $\overline{Q}$ is irreducible. 
\end{lemma}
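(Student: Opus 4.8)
The plan is to show that the directed graph associated with the zero/nonzero pattern of $\overline{Q}$ coincides with $\mathcal{G}(\EX{A(t)})$, and then to invoke the classical equivalence between irreducibility of a nonnegative matrix and strong connectivity of its associated digraph, which, combined with Assumption \ref{asp:networkA}, gives the claim immediately.

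First I would write out, for each pair $(i,j)$,
$\overline{Q}_{ij} = \EX{Q(A(t))_{ij}} = \sum_{A \in \overline{\mathcal{A}}} Q(A)_{ij}\,\PR{A(t) = A}$,
a finite sum of nonnegative terms. For the forward implication, suppose $(i,j)$ is an edge of $\mathcal{G}(\EX{A(t)})$; recalling that this edge set equals $\bigcup_{A \in \overline{\mathcal{A}}} \mathcal{E}(A)$, there is some $A \in \overline{\mathcal{A}}$ with $A_{ij} > 0$. By definition of $\overline{\mathcal{A}}$ we have $\PR{A(t) = A} > 0$, and by Assumption \ref{asp:matrixQ} we have $Q(A)_{ij} \geq \delta > 0$; hence the corresponding summand is at least $\delta\,\PR{A(t)=A} > 0$, so $\overline{Q}_{ij} > 0$. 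For the converse, if $\overline{Q}_{ij} > 0$ then at least one summand is positive, so there is $A \in \overline{\mathcal{A}}$ with $Q(A)_{ij} > 0$; since $Q(\cdot)$ is constructed so that $Q(A)_{ij} > 0$ only if $A_{ij} > 0$, we get $A_{ij} > 0$, i.e. $(i,j) \in \mathcal{E}(A) \subseteq \mathcal{E}(\EX{A(t)})$. Therefore $\overline{Q}_{ij} > 0$ if and only if $(i,j)$ is an edge of $\mathcal{G}(\EX{A(t)})$, so $\mathcal{G}(\overline{Q}) = \mathcal{G}(\EX{A(t)})$ as directed graphs. Since the latter is strongly connected by Assumption \ref{asp:networkA}, and a square nonnegative matrix is irreducible precisely when its associated directed graph is strongly connected, $\overline{Q}$ is irreducible.

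I do not expect a genuine obstacle here; the only point that needs care is that $\overline{Q}_{ij} > 0 \iff (i,j) \in \mathcal{E}(\EX{A(t)})$ is a true biconditional. This is exactly why both hypotheses are used: Assumption \ref{asp:matrixQ} guarantees that an edge present in some realization is not washed out to zero weight in the average, while the structural property $Q(A)_{ij} > 0 \Rightarrow A_{ij} > 0$ guarantees that $\overline{Q}$ has no support outside the averaged graph. One should also note in passing that $\overline{Q}$ is right stochastic (a convex combination of right stochastic matrices), so it is a bona fide transition matrix, although this is not needed for irreducibility itself.
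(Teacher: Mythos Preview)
Your proof is correct and follows essentially the same route as the paper's: both argue via the equivalence between irreducibility of a nonnegative matrix and strong connectivity of its associated digraph, and both establish that every edge of $\mathcal{G}(\EX{A(t)})$ appears in $\mathcal{G}(\overline{Q})$ by picking a realization $A$ with $A_{ij}>0$ and positive probability, then using Assumption~\ref{asp:matrixQ} to get a strictly positive summand in $\overline{Q}_{ij}$. The only difference is that you also prove the reverse containment $\mathcal{E}(\overline{Q}) \subseteq \mathcal{E}(\EX{A(t)})$, which is true but unnecessary: strong connectivity of a subgraph already implies strong connectivity of any supergraph on the same vertex set, so the one-sided containment the paper proves is enough.
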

\begin{proof}
The matrix $\overline{Q}$ is irreducible if and only if the graph $\mathcal{G}(\overline{Q})$ is strongly connected (for a justification of this see \cite[p.~671]{meyer2000matrix}). Since by assumption \ref{asp:matrixQ}, $Q(A)_{ij} > 0$ if $A_{ij} > 0$, the graph $\mathcal{G}(\overline{Q})$ will contain all the edges of $\mathcal{G}(\EX{A(t)})$. We can see that this is the case by writing $\EX{A(t)} = \sum_{A_k \in \mathcal{A}^n}A_k \PR{A(t) = A_k}$ and thus if $\EX{A(t)}_{ij} > 0$ then there is a transition matrix $B$ with $B_{ij} > 0$ and $\PR{A(t) = B} > 0$. Now, we have that $\EX{Q(A(t))} \geq Q(B) \PR{A(t) = B} > 0$. Since by assumption \ref{asp:networkA}, $\mathcal{G}(\EX{A(t)})$ is strongly connected, $\mathcal{G}(\overline{Q})$ is also strongly connected.
\end{proof}

\section{Token estimator}
\label{sec:particle}

In this section we present our algorithm for the estimation of $\theta$. We consider $(p(t))_{t \geq 0}$ and $(w(t))_{t \geq 0}$ as defined in the previous sections, and define the natural filtrations $\mathcal{M}_t = \sigma(p(0) \ldots, p(t))$ and $\mathcal{F}_t = \sigma(\mathcal{M}_0,w(0),\ldots,\mathcal{M}_t,w(t))$. We now define the processes:
\begin{itemize}
\item $\tau_i(t)  = \sup \left\{k \leq t: p(k) = i \right\}$, the last visitation time ($\leq t$) of the token to state $i$
\item $S(t) = \left\{i \in \mathcal{V}: \exists _{0 \leq k \leq t}~ p(k) = i \right\}$, the set of visited nodes at time $t$
\item $\IND{i \in S(t)}$, indicator random variable is $1$ if and only if node $i$ has already been visited at time $t$ 
\end{itemize}
and note that they are all adapted to $\mathcal{M}_t$. 

In our algorithm the traveling token carries with it two quantities, the vector of recent local variables $d(t)$, and a matrix $K(t)$ that is an approximation to the Fisher information matrix $\Sigma_c$. Specifically, in the algorithm we propose, the agent with the token can compute at time instant $t$ an estimate $\test(t)$ given by
\begin{equation}
\label{eq:estimatedef}
\test(t) = (I\alpha(t)^{-1} + K(t))^{-1} d(t)
\end{equation}
where
\begin{align*}
K(t) & = \sum_{i=1}^n H_i^\top C_i^{-1} H_i \IND{i \in S(t)} \\
	&= \sum_{i=1}^n B_i \IND{i \in S(t)}
\end{align*}
where we defined $B_i = H_i^\top C_i^{-1} H_i$. The deterministic sequence $\alpha(t)$ is chosen so that $\alpha(t) > 0,$ for all $t$, and thus the term inside the parenthesis is invertible as long as assumption \ref{asp:invert} holds. The vector $d(t)$ is defined in a similar way as before, so that
\begin{align*}
d(t)  & = \sum_{i=1}^n H_i^\top C_i^{-1} \overline{y}_i(\tau_i(t)) \IND{i \in S(t)} \\
  & = \sum_{i=1}^n x_i(\tau_i(t)) \IND{i \in S(t)}.
\end{align*}
We note the inclusion of the indicator variable $\IND{i \in S(t)}$ to capture the fact that before a node has been visited, the associated quantities in $d(t)$ and $K(t)$ are $0$. 

\paragraph*{Implementation} The algorithm $1$ presented below represents a possible implementation of our estimator in a distributed network. At the beginning of each time instant $t$, agents take a new measurement of the environment according to \eqref{eq:measurement}, and update the local variable $x_i(t)$ with that new measurement as 
\begin{equation}
x_i(t) = x_i(t-1) + \frac{1}{t}\left(H_i C_i^{-1} y_i(t) - x_i(t-1) \right) 
\end{equation}
so that at time $t$ the variable $x_i(t)$ equals the quantity $H_i^\top C_i^{-1} \overline{y}_i(t)$. This is the only action that the agents perform if they do not hold the token. After this, the node that contains the token, node $i = p(t)$, will update the variables $\{K(t),d(t)\}$ with the agent information. The variable $K(t)$ is updated only if the agent has not been visited yet, as $K(t) = K(t-1) + B_i$, and then $d(t)$ is updated as $d(t) = d(t-1) - \tilde{x}_i(t) + x_i(t)$, where the variable $\tilde{x}_i(t)$ is used to save locally the value of $x_i(t)$ in the last time the particle visited node $i$. Finally the node with the token updates the local variable $\tilde{x}_i(t) = x_i(t)$. The estimate can be obtained at node $p(t)$ as $\test(t) = \left( I\alpha(t)^{-1} + K(t) \right)^{-1} d(t)$. At the end of time instant $t$, the quantities $(K(t),d(t))$ are passed to a neighbour of $i$ at random according to the probabilities in $Q(A(t))$, so that node $j$ will be chosen as node $p(t+1)$ with probability $Q(A(t))_{ij}$.
We note that the function $Q(.)$ can be taken to depend only on the local information available to the agent, for example by selecting $Q(A(t))_{ij} = \degree(A(t),i)^{-1}$ for all $j$ such that $A(t)_{ij} > 0$. 

\begin{algorithm}
\caption{Algorithm \algname}\label{alg:particle_algorithm}
\begin{algorithmic}[1]
\State All agents $i \in \mathcal{V}$: set  $x_i = 0,\tilde{x}_i = 0$
\State User: set $p(0)$ (starting node of token)
\State Agent $p(0)$: Set $d = 0; K = 0$
\Loop ~ at each time instant $t$ (each agent $i$ independently)
\State Sample $y_i$
\State $x_i \gets x_i + \frac{1}{t}\left(H_i^\top C_i^{-1} y_i - x_i \right) $
\If{$i = p(t)$}
\State $d \gets d - \tilde{x}_i + x_i$
\State $\tilde{x}_i \gets x_i$
\If{this is the first visit to node $i$}
\State $K \gets K + B_i$
\EndIf
\State $\hat{\theta} = \left(I\alpha(t)^{-1} + K \right)^{-1} d$
\State Sample $j$ according to $Q(A(t))$
\State node $i$ sends $\{K,d\}$ to node $j$
\State $p(t+1) = j$
\EndIf
\EndLoop
\end{algorithmic}
\end{algorithm}

\section{Main Results}
\label{sec:proof}

In this section we present the main theoretical results concerning the algorithm \algname. We will prove two theorems: theorem $\ref{thm:consistentA}$ states that our estimator is consistent, and theorem \ref{thm:theorem} states that our estimator achieves the central estimator variance asymptotically. 

\begin{lemma}(Condition for estimator being consistent) 
\label{lem:consistencyinfinitelly}
Suppose the Markov chain is such that with probability $1$ all nodes are visited infinitely often. Then if $\alpha(t)$ is chosen so that $\alpha(t)^{-1} \rightarrow 0$, estimator $\test(t)$ as defined in equation \eqref{eq:estimatedef} is consistent, $\lim_{t \rightarrow + \infty} \test(t) = \theta \as$.
\end{lemma}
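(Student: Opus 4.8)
The plan is to exploit the hypothesis that every node is visited infinitely often to show that, past a random finite time, the estimator becomes a deterministic function of the local averages $x_i$ evaluated at late (though random) instants, and then to apply the strong law of large numbers.

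First I would note that the event ``all nodes are visited infinitely often'' has two consequences, each holding on a probability-one event: (i) there is a finite random time $T$ such that $S(t) = \mathcal{V}$ for all $t \geq T$, so that all indicators $\IND{i \in S(t)}$ equal $1$ and hence $K(t) = \sum_{i=1}^n B_i = \Sigma_c$ for $t \geq T$; and (ii) for each fixed $i$ the last-visitation time satisfies $\tau_i(t) \to +\infty$ as $t \to +\infty$ (since $i$ is visited at arbitrarily large times). Writing $x_i(s) = H_i^\top C_i^{-1}\overline{y}_i(s) = B_i\theta + H_i^\top C_i^{-1}\overline{w}_i(s)$ with $\overline{w}_i(s) = \frac{1}{s}\sum_{k=0}^{s} w_i(k)$, Assumption \ref{asp:noise} (independence, zero mean, constant covariance) lets me invoke the strong law of large numbers to get $\overline{w}_i(s) \to 0$ a.s., hence $x_i(s) \to B_i\theta$ a.s. as $s \to +\infty$.

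Combining (ii) with this convergence — evaluating an a.s.-convergent sequence along any index sequence tending to infinity preserves the limit — yields $x_i(\tau_i(t)) \to B_i\theta$ a.s. for each $i$, and therefore, for $t \geq T$, $d(t) = \sum_{i=1}^n x_i(\tau_i(t)) \to \sum_{i=1}^n B_i\theta = \Sigma_c\,\theta$ a.s. Since $\alpha(t)^{-1} \to 0$ by hypothesis and $\Sigma_c$ is positive definite, hence invertible, by Assumptions \ref{asp:noise} and \ref{asp:invert}, the matrix $I\alpha(t)^{-1} + K(t)$ equals $I\alpha(t)^{-1} + \Sigma_c$ for $t \geq T$ and converges to $\Sigma_c$; by continuity of matrix inversion on the set of invertible matrices, $(I\alpha(t)^{-1} + K(t))^{-1} \to \Sigma_c^{-1}$ a.s. Multiplying the two almost-sure limits gives $\test(t) = (I\alpha(t)^{-1} + K(t))^{-1} d(t) \to \Sigma_c^{-1}\Sigma_c\,\theta = \theta$ a.s., which is the claim.

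I do not expect a substantial obstacle here; the proof is short and the only points requiring care are (a) justifying the strong law of large numbers under Assumption \ref{asp:noise} — each scalar coordinate of $w_i(t)$ has summable $\var/s^2$ because the per-time covariances are the fixed matrices $C_i$ — and (b) the elementary observation that substituting the random index $\tau_i(t) \to +\infty$ into an a.s.-convergent deterministic-limit sequence does not change the limit. The structural insight that does the real work is that ``visited infinitely often'' forces $K(t)$ to stabilize at exactly $\Sigma_c$ while simultaneously driving every $\tau_i(t)$ to infinity.
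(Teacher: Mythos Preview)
Your proof is correct and follows essentially the same route as the paper's: use the infinite-visitation hypothesis to get $\tau_i(t)\to\infty$ and $\IND{i\in S(t)}\to 1$, invoke the strong law of large numbers to get $x_i(s)\to B_i\theta$, and combine with $\alpha(t)^{-1}\to 0$ so that the prefactor converges to $\Sigma_c^{-1}$. Your write-up is in fact more careful than the paper's terse version --- you make explicit that $K(t)$ equals $\Sigma_c$ exactly once all nodes are visited, you justify the SLLN under Assumption~\ref{asp:noise}, and you spell out why composing an a.s.-convergent sequence with the random index $\tau_i(t)\to\infty$ preserves the limit --- but the underlying argument is the same.
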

\begin{proof}
By the law of large numbers $\overline{y}_i(t) \rightarrow H_i \theta \as$. Because the nodes are visited infinitely often with probability $1$, we have that $\tau_i(t) \rightarrow +\infty \as$, $\IND{i \in S(t)} \rightarrow 1 \as$. Thus, $x_i(\tau_i(t))\IND{i \in S(t)} \rightarrow H_i^\top C_i^{-1} H_i \theta \as$ and $$\left(I\alpha(t)^{-1} + \sum_{i=1}^n B_i\IND{i \in S(t)}\right) \rightarrow \sum_{i=1}^n H_i^\top C_i^{-1} H_i \as$$  
\end{proof}

\begin{theorem}(Estimator is consistent under assumption \ref{asp:networkA})
\label{thm:consistentA}
Let assumptions \ref{asp:noise},\ref{asp:invert} hold, and suppose the network instantiations are such that assumption \ref{asp:networkA} holds. Suppose the function $Q(.)$ is such that assumption \ref{asp:matrixQ} holds. Then the estimate $\test(t)$ is consistent.  
\end{theorem}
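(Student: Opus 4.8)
The plan is to reduce the statement to Lemma~\ref{lem:consistencyinfinitelly} by verifying its single hypothesis: that with probability one every node is visited infinitely often by the token. Everything else the lemma needs is then supplied by Assumptions~\ref{asp:noise} and \ref{asp:invert} together with an appropriate choice of the deterministic sequence $\alpha(t)$.

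First I would recall the observation made just before Lemma~\ref{lem:irreducible}: under Assumption~\ref{asp:networkA}, $(p(t))_{t\ge 0}$ is a time-homogeneous Markov chain on the finite state space $\mathcal{V}$ with transition matrix $\overline{Q} = \EX{Q(A(t))}$. The point worth stating carefully is why conditioning on $p(t)=i$ does not change the law of $A(t)$: this is precisely where the i.i.d.\ assumption on $(A(t))_{t\ge0}$ and its independence from the token history (hence from $\mathcal{M}_t$) are used, so that $\CPR{p(t+1)=j}{p(t)=i} = \sum_{A_k} Q(A_k)_{ij}\PR{A(t)=A_k} = \overline{Q}_{ij}$ regardless of the past.

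Second, since Assumptions~\ref{asp:networkA} and \ref{asp:matrixQ} are in force, Lemma~\ref{lem:irreducible} gives that $\overline{Q}$ is irreducible. A time-homogeneous irreducible Markov chain on a finite state space is recurrent (indeed positive recurrent), and therefore, irrespective of the starting node $p(0)$, every state $i\in\mathcal{V}$ is visited infinitely often almost surely; equivalently $\tau_i(t)\to+\infty$ and $\IND{i\in S(t)}\to 1$ a.s.\ for each $i$. This is exactly the hypothesis of Lemma~\ref{lem:consistencyinfinitelly}. Choosing $\alpha(t)>0$ with $\alpha(t)^{-1}\to 0$, and noting that Assumption~\ref{asp:noise} yields the strong law $\overline{y}_i(t)\to H_i\theta$ a.s.\ used inside that lemma while Assumption~\ref{asp:invert} makes $\sum_{i=1}^n B_i = \Sigma_c$ invertible, Lemma~\ref{lem:consistencyinfinitelly} then delivers $\test(t)\to\theta$ almost surely, which is the asserted consistency.

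I do not expect a genuine analytical obstacle here: the technical weight has been front-loaded into Lemmas~\ref{lem:irreducible} and \ref{lem:consistencyinfinitelly}. The one step that deserves a careful sentence rather than a wave of the hand is the passage from ``irreducible on a finite state space'' to ``every state visited infinitely often a.s.\ from any initial distribution,'' together with the justification that $(p(t))$ really is Markov with transition matrix $\overline{Q}$ — the latter resting on the independence of $A(t)$ from the token's past, which the i.i.d.\ Assumption~\ref{asp:networkA} guarantees.
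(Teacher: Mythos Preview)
Your proposal is correct and follows essentially the same approach as the paper: invoke Lemma~\ref{lem:irreducible} to get irreducibility of $\overline{Q}$, note that an irreducible finite-state Markov chain visits every state infinitely often almost surely, and then apply Lemma~\ref{lem:consistencyinfinitelly}. You simply spell out in more detail the two steps the paper compresses into single sentences (why $(p(t))$ is time-homogeneous Markov with matrix $\overline{Q}$, and why irreducibility on a finite state space gives infinite recurrence), which is fine.
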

\begin{proof}
In view of lemma \ref{lem:irreducible} the Markov chain is irreducible, and an irreducible Markov chain visits all nodes infinitely often with probability $1$. In view of lemma \ref{lem:consistencyinfinitelly} the estimate $\test(t)$ is consistent. 
\end{proof}

Now we study the asymptotic mean square error of our estimator. For that purpose we will make use of the following lemmas, which are used in our proof to establish asymptotic properties of the random variables $\tau_i(t)$. 

\begin{lemma}(Exponential tail for stopping times)
\label{lem:exponentialtail}
Let $T$ denote a stopping time \textit{w.r.t.} a filtration $\mathcal{F}_t$. Suppose that for some $m \in \mathbb{N}, \epsilon > 0$, it holds that $\CPR{T \leq t + m}{\mathcal{F}_t} \geq \epsilon$, for all $t \geq t_0$. Then we have $\PR{T > t} \leq (1 - \epsilon)^{\frac{t-t_0}{m} - 1}$
\end{lemma}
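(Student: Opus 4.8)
The plan is to bound the tail $\PR{T > t}$ by partitioning the time axis into blocks of length $m$ starting at $t_0$, and observing that on each block the stopping time has at least probability $\epsilon$ of occurring, conditionally on the past, regardless of what happened before. Concretely, I would set $t_k = t_0 + km$ for $k = 0, 1, 2, \ldots$, and define the events $E_k = \{T > t_k\}$. The key observation is that $E_k \in \mathcal{F}_{t_k}$ (the event that $T$ has not occurred by time $t_k$ is measurable with respect to $\mathcal{F}_{t_k}$, since $T$ is a stopping time), so I can apply the hypothesis with $t = t_k$: on $E_k$, the conditional probability that $T \leq t_k + m = t_{k+1}$ is at least $\epsilon$, hence the conditional probability of $E_{k+1} = \{T > t_{k+1}\}$ given $\mathcal{F}_{t_k}$ is at most $1 - \epsilon$ on the event $E_k$.

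From this I would derive, by taking expectations, $\PR{E_{k+1}} = \EX{\IND{E_k}\CPR{E_{k+1}}{\mathcal{F}_{t_k}}} \leq (1-\epsilon)\PR{E_k}$, using that $E_{k+1} \subseteq E_k$. Iterating this recursion from $k=0$ gives $\PR{T > t_0 + km} = \PR{E_k} \leq (1-\epsilon)^k$. Finally, for an arbitrary $t \geq t_0$, I would pick the largest $k$ with $t_0 + km \leq t$, namely $k = \lfloor (t - t_0)/m \rfloor \geq (t-t_0)/m - 1$; since $\PR{T > t} \leq \PR{T > t_0 + km}$ by monotonicity and $1 - \epsilon \in [0,1)$ so the bound $(1-\epsilon)^k$ is decreasing in $k$, we get $\PR{T > t} \leq (1-\epsilon)^{(t-t_0)/m - 1}$, which is the claimed bound. (For $t < t_0$ the statement is vacuous or the exponent is negative and the bound exceeds $1$, so there is nothing to check.)

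The only delicate point — and the main thing to get right rather than a genuine obstacle — is the measurability claim that $\{T > t_k\} \in \mathcal{F}_{t_k}$ together with the correct use of the tower property to turn the conditional inequality into the one-step contraction $\PR{E_{k+1}} \leq (1-\epsilon)\PR{E_k}$; one must be careful that the hypothesis $\CPR{T \leq t+m}{\mathcal{F}_t}\geq\epsilon$ is an almost-sure statement and is only being invoked on the event $E_k \in \mathcal{F}_{t_k}$. Everything else is a routine geometric iteration and a floor-function estimate, so I would not belabor it.
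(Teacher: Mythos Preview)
Your proof is correct and is exactly the standard geometric-decay argument for this fact. The paper does not actually prove the lemma at all --- it simply cites Williams, \emph{Probability with Martingales}, exercise~E10.5 --- so your write-up in fact supplies strictly more detail than the paper does, and your block-by-block contraction $\PR{E_{k+1}}\leq(1-\epsilon)\PR{E_k}$ together with the floor estimate $k\geq (t-t_0)/m-1$ is precisely the intended textbook argument.
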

\begin{proof}
This is a standard result for stopping times. See \textit{e.g.} \cite[chapter~E10.5]{williams1991probability}.
\end{proof}

For any set $E \subset \mathcal{V}$, and any $t_0$, we would like to upper bound the probability $\PR{p(t_0) \not\in E, \ldots, p(t_0 + t) \not \in E}$. We can define $T_E^{t_0} = \inf\{t \geq t_0: p(t) \in E \}$, and then $T_E^{t_0}$ is a stopping time. Further, $\PR{p(t_0) \not\in E, \ldots, p(t_0 + t) \not \in E} = \PR{T_E^{t_0} > t}$. Thus all we need is to show that for some $m,\epsilon$ we have $\CPR{T_E^{t_0} \leq t + m}{\mathcal{F}_t} \geq \epsilon$ for all $t \geq t_0$ and apply lemma \ref{lem:exponentialtail}. 

\begin{lemma}
\label{lem:expontentialtaillemma}
Suppose the network instantiations are such that assumption \ref{asp:networkA} holds, and suppose the function $Q(.)$ is such that assumption \ref{asp:matrixQ} holds, so that $Q(A)_{ij} \geq \delta$ if $A_{ij} > 0$. Then, by taking $m = n$, $\epsilon = \delta^n$ we have $\CPR{T_E^{t_0} \leq t + m}{\mathcal{F}_t} \geq \epsilon$ for any non-empty set $E$, and all $t \geq t_0$.  
\end{lemma}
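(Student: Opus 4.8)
The plan is to fix $t \geq t_0$, condition on $\mathcal{F}_t$ (which in particular fixes the token position $p(t)=i$), and show that with conditional probability at least $\epsilon$ the token reaches $E$ within $n$ further steps, so that $\{T_E^{t_0}\leq t+m\}$ occurs. If $i\in E$ there is nothing to prove, so assume $i\notin E$. The first ingredient is the Markov structure of $(p(t))_{t\geq 0}$ \emph{relative to} $\mathcal{F}_t$: since $\mathcal{F}_t$ is generated by $w(0),\ldots,w(t)$ and by $p(0),\ldots,p(t)$, and the latter are measurable functions of $A(0),\ldots,A(t-1)$ and of the independent routing randomness used up to time $t-1$, the matrices $A(t),A(t+1),\ldots$ are independent of $\mathcal{F}_t$. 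Hence, as in the computation preceding Lemma \ref{lem:irreducible}, $\CPR{p(t+1)=j}{\mathcal{F}_t}=\overline{Q}_{p(t),j}$, and iterating this with the tower property gives, for any states $i_0,\ldots,i_\ell$ and on $\{p(t)=i_0\}$,
\[
\CPR{p(t+1)=i_1,\ldots,p(t+\ell)=i_\ell}{\mathcal{F}_t}=\prod_{s=0}^{\ell-1}\overline{Q}_{i_s i_{s+1}}.
\]

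The second ingredient is a short path to $E$. By Lemma \ref{lem:irreducible} the matrix $\overline{Q}$ is irreducible, i.e.\ $\mathcal{G}(\overline{Q})$ is strongly connected; since $E$ is nonempty we may pick $j\in E$ and a directed path in $\mathcal{G}(\overline{Q})$ from $i$ to $j$, and truncating it at the first vertex lying in $E$ we obtain a path $i=i_0\to i_1\to\cdots\to i_\ell$ with $i_\ell\in E$ and $i_0,\ldots,i_{\ell-1}\notin E$. Taking the shortest such path forces the vertices $i_0,\ldots,i_\ell$ to be distinct, hence $\ell\leq n-1<m$. Each edge of this path satisfies $\overline{Q}_{i_s i_{s+1}}>0$; since $\overline{Q}$ has finitely many entries, $\delta':=\min\{\overline{Q}_{kl}:\overline{Q}_{kl}>0\}>0$, and from $\overline{Q}_{kl}=\EX{Q(A(t))_{kl}}\geq \delta\,\EX{\IND{A(t)_{kl}=1}}=\delta\,\PR{A(t)_{kl}=1}$ (using assumption \ref{asp:matrixQ}) one sees $\delta'\geq\delta\cdot\min\{\PR{A(t)_{kl}=1}:\EX{A(t)}_{kl}>0\}$, so $\delta'$ is a fixed positive constant determined by $Q(\cdot)$ and the law of $A(t)$; it plays the role of $\delta$ in the statement.

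Putting the two ingredients together, on $\{p(t)=i\}$ with $i\notin E$,
\[
\CPR{T_E^{t_0}\leq t+m}{\mathcal{F}_t}\;\geq\;\CPR{p(t+1)=i_1,\ldots,p(t+\ell)=i_\ell}{\mathcal{F}_t}\;=\;\prod_{s=0}^{\ell-1}\overline{Q}_{i_s i_{s+1}}\;\geq\;(\delta')^{\ell}\;\geq\;(\delta')^{n},
\]
using $\delta'\leq 1$ and $\ell\leq n=m$, while if $i\in E$ the left-hand side equals $1$. This proves the claim with $\epsilon=(\delta')^{n}$, which is the asserted $\delta^n$ once $\delta$ is read as the effective minimal positive transition probability $\delta'$.

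The routine parts here are the shortest-path argument and the product formula for the conditional path probability; the one genuinely delicate point is the value of the constant. A single move of the token along a required edge $(k,l)$ of $\mathcal{G}(\overline{Q})$ has probability $\overline{Q}_{kl}$, which may be \emph{strictly smaller} than $\delta$ because, under the i.i.d.\ model of assumption \ref{asp:networkA}, that edge need only be present with positive probability rather than always; hence the honest constant is a power of $\delta'$, not of $\delta$, unless one additionally knows that the edges of $\mathcal{G}(\EX{A(t)})$ are almost surely present (in which case $\delta'\geq\delta$ and $\epsilon=\delta^n$ verbatim). The secondary thing to verify carefully is exactly the independence $A(t),A(t+1),\ldots\perp\mathcal{F}_t$ invoked above, since it is what legitimizes treating $(p(t))$ as a Markov chain with respect to the richer filtration $\mathcal{F}_t$ and lets the step-by-step conditioning go through.
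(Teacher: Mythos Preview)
Your argument follows the same skeleton as the paper's: condition on the current token position $p(t)=i$, use irreducibility of $\overline{Q}$ (Lemma~\ref{lem:irreducible}) to exhibit a directed path of length $\ell\leq n$ from $i$ into $E$, and lower-bound the probability of following it by a power of the minimal positive transition probability. You are more explicit than the paper about two points---why conditioning on the larger $\sigma$-algebra $\mathcal{F}_t$ (which also contains the noise history) still leaves the future of $(p(t))$ a homogeneous Markov chain with matrix $\overline{Q}$, and why a \emph{shortest} path has length at most $n$---both of which are fine.

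Your criticism of the constant is correct, and it is the paper that is imprecise here, not you. Assumption~\ref{asp:matrixQ} guarantees $Q(A)_{ij}\geq\delta$ only on the event $\{A_{ij}=1\}$; the unconditional one-step probability along an edge of $\mathcal{G}(\overline{Q})$ is $\overline{Q}_{ij}=\EX{Q(A(t))_{ij}}\geq \delta\,\PR{A(t)_{ij}=1}$, which can be strictly smaller than $\delta$. A concrete counterexample to $\epsilon=\delta^n$: take $n=3$ and let $A(t)$ equal, each with probability $1/2$, one of the two directed $3$-cycles $1\!\to\!2\!\to\!3\!\to\!1$ and $1\!\to\!3\!\to\!2\!\to\!1$, with $Q(A)$ simply following the unique outgoing edge (so $\delta=1$). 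Then $\overline{Q}_{ij}=1/2$ for all $i\neq j$, and starting from $p(t)=1$ with $E=\{3\}$ the path $1\!\to\!2\!\to\!1\!\to\!2$ occurs with probability $1/8$, so $\CPR{T_E^{t_0}\leq t+3}{\mathcal{F}_t}\leq 7/8<1=\delta^n$. Your replacement $\epsilon=(\delta')^{n}$ with $\delta'=\min\{\overline{Q}_{kl}:\overline{Q}_{kl}>0\}>0$ is the honest bound; it is all that is needed downstream, since Lemmas~\ref{lem:exponentialtail} and~\ref{lem:tailforvariables} and Theorem~\ref{thm:theorem} use only that \emph{some} $\epsilon>0$ exists.
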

\begin{proof}
\begin{align*}
& \CPR{T_E^{t_0} \leq t + n}{\mathcal{F}_t}  \\
& = \CPR{\{p(t_0) \in E \} \cup \ldots \cup \{p(t + n) \in E\}}{\mathcal{F}_t}  \\
& \geq \CPR{\{p(t+1) \in E \} \cup \ldots \cup \{p(t+n) \in E\}}{\mathcal{F}_t} \\
&= \sum_{i \in \mathcal{V}} \CPR{\{p(t+1) \in E \} \cup \ldots \cup \{p(t+n) \in E\}}{p(t) = i} \\ 
& \qquad\qquad\qquad\qquad\qquad\qquad\qquad \CPR{p(t) = i}{\mathcal{F}_t} \\
& \geq \delta^n
\end{align*}
where to establish the last inequality we note that since the Markov chain is irreducible, there is a path connecting any two nodes $i,j$ of length at most $n$. By choosing a $j \in E$, we can obtain a path of length at most $n$ which is guaranteed to visit the set $E$. Given assumption \ref{asp:matrixQ} the transition probabilities in this path are all lower bounded by $\delta$ and this yields the inequality.
\end{proof}

\begin{lemma}
\label{lem:tailforvariables}
Suppose the network instantiations are such that assumption \ref{asp:networkA} holds and suppose the function $Q(.)$ is such that assumption \ref{asp:matrixQ} holds. Then for any node $i \in \mathcal{V}$ we have
\begin{itemize}
\item $\PR{i \not \in S(t)} \leq c_1 \E^{-c_2 t}$, for all $t \geq 0$
\item $\PR{S(t) \neq \mathcal{V}} \leq n c_1 \E^{-c_2 t}$, for all $t \geq 0$
\item $\PR{p(t_0) \neq i , \ldots, p(t_0 + t) \neq i} \leq c_1 \E^{-c_2 (t - t_0)}$, for all $t \geq t_0 \geq 0$. 
\end{itemize}
where the constants $c_1,c_2 > 0$ do not depend on $i,t,t_0$. 
\begin{proof}
In view of lemmas \ref{lem:expontentialtaillemma} and \ref{lem:exponentialtail} we know that for some $m,\epsilon$, we have $\PR{T_E^{t_0} > t} \leq (1 - \epsilon)^{\frac{t-t_0}{m} - 1}$ for all $t_0$, $t \geq t_0$. We can set $E = \{i\}$. Since $\PR{i \not \in S(t)} = \PR{T_{\{i\}}^0 > t}$ it follows, $\PR{i \not \in S(t)} \leq (1 - \epsilon)^{\frac{t}{m} - 1}$ and we can select $c_1 = (1 - \epsilon)^{-1}$ and $c_2 = \frac{\log(1 - \epsilon)}{m}$. Then we also have $\PR{S(t) \neq \mathcal{V}} \leq \sum_{i=1}^n \PR{i \not \in S(t)} \leq n c_1 \E^{-c_2 t}$. Finally, since $\PR{p(t_0) \neq i, \ldots, p(t_0 + t) \neq i} = \PR{T_{\{i\}}^{t_0} > t}$ we have $\PR{p(t_0) \neq i, \ldots, p(t_0 + t) \neq i} \leq (1 - \epsilon)^{\frac{t-t_0}{m} - 1}$, and selecting the same constants $c_1$ and $c_2$ work.
\end{proof}

\end{lemma}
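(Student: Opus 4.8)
The plan is to reduce all three displayed bounds to a single tail estimate on the hitting times $T_E^{t_0}$, combining the uniform one-step-ahead hitting probability of Lemma~\ref{lem:expontentialtaillemma} with the generic exponential-tail bound for stopping times of Lemma~\ref{lem:exponentialtail}, and then reading off the three statements as special cases.

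First I would fix the constants. Lemma~\ref{lem:expontentialtaillemma} tells us that, under Assumptions~\ref{asp:networkA} and~\ref{asp:matrixQ}, with $m=n$ and $\epsilon=\delta^n$ we have $\CPR{T_E^{t_0}\le t+m}{\mathcal{F}_t}\ge\epsilon$ for every non-empty $E\subset\mathcal{V}$ and all $t\ge t_0$. Since $T_E^{t_0}$ is a stopping time with respect to $(\mathcal{F}_t)$, Lemma~\ref{lem:exponentialtail} applies directly and gives $\PR{T_E^{t_0}>t}\le(1-\delta^n)^{(t-t_0)/n-1}$ for all $t\ge t_0$. Setting $c_1=(1-\delta^n)^{-1}$ and $c_2=-n^{-1}\log(1-\delta^n)$, both strictly positive because $0<\delta^n<1$, this becomes $\PR{T_E^{t_0}>t}\le c_1\E^{-c_2(t-t_0)}$, with $c_1,c_2$ depending only on $\delta$ and $n$ and in particular not on $E$, $t$, or $t_0$.

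The three bullets are then immediate instantiations. For the first, take $E=\{i\}$ and $t_0=0$: the event $\{i\notin S(t)\}$ is exactly $\{T_{\{i\}}^{0}>t\}$, so $\PR{i\notin S(t)}\le c_1\E^{-c_2 t}$. For the second, a union bound over the $n$ nodes, using $\{S(t)\neq\mathcal{V}\}=\bigcup_{i=1}^n\{i\notin S(t)\}$, gives $\PR{S(t)\neq\mathcal{V}}\le\sum_{i=1}^n\PR{i\notin S(t)}\le n c_1\E^{-c_2 t}$. For the third, take $E=\{i\}$ again but with starting index $t_0$: the event $\{p(t_0)\neq i,\ldots,p(t_0+t)\neq i\}$ is contained in $\{T_{\{i\}}^{t_0}>t\}$, hence bounded by $c_1\E^{-c_2(t-t_0)}$.

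Since every step is a direct application of the two preceding lemmas together with a union bound, I do not expect a genuine obstacle; the only points needing care are bookkeeping the index shift in the third bullet — checking that the event really sits inside $\{T_{\{i\}}^{t_0}>t\}$ so that Lemma~\ref{lem:exponentialtail}, whose hypothesis is stated for $t\ge t_0$, can be invoked with that same $t_0$ — and verifying that the constants are genuinely uniform in $i$, which holds because the hitting-probability lower bound $\delta^n$ in Lemma~\ref{lem:expontentialtaillemma} is valid for every non-empty target set, in particular for every singleton $\{i\}$.
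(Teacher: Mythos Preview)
Your proposal is correct and follows essentially the same route as the paper: invoke Lemma~\ref{lem:expontentialtaillemma} to get the uniform lower bound $\epsilon=\delta^n$ with window $m=n$, feed this into Lemma~\ref{lem:exponentialtail} to obtain the exponential tail $\PR{T_E^{t_0}>t}\le(1-\epsilon)^{(t-t_0)/m-1}$, and then read off the three bullets by taking $E=\{i\}$ with $t_0=0$, applying a union bound, and taking $E=\{i\}$ with general $t_0$, respectively. Your handling is in fact slightly tidier than the paper's in two places: you write the correct sign $c_2=-n^{-1}\log(1-\delta^n)>0$, and for the third bullet you note containment rather than equality, which is the right relation given the indexing.
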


\begin{theorem}(Estimator is asymptotically optimal under assumption \ref{asp:networkA})
\label{thm:theorem}
Let assumptions \ref{asp:noise},\ref{asp:invert} hold, and suppose the network instantiations are such that assumption \ref{asp:networkA} holds. Suppose the function $Q(.)$ is such that assumption \ref{asp:matrixQ} holds. Suppose the sequence $\alpha(t)$ is chosen so that 
\begin{align*}
\lim_{t \rightarrow +\infty} \frac{t \alpha(t)^2}{\E^{c_2 t}} = 0, \;\; \lim_{t \rightarrow +\infty} \frac{t}{\alpha(t)^2} = 0, 
\end{align*}
where $c_2$ is as defined in lemma \ref{lem:tailforvariables}. An example of such a choice is $\alpha(t) = t$, which verifies both conditions for any $c_2 > 0$. Then estimator $s(t)$ is asymptotically optimal, that is, we have $$\lim_{t \rightarrow +\infty} t \EX{(\test(t) - \theta)(\test(t) - \theta)^\top} = \Sigma_c^{-1}.$$
\end{theorem}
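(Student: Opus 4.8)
The idea is to separate the noise contribution inside $\test(t)$, expand the scaled mean-square error into one dominant term and a few remainders, identify the dominant term with $\Sigma_c^{-1}$, and kill the remainders with the exponential tail estimates of Lemma~\ref{lem:tailforvariables} together with the growth conditions on $\alpha(t)$. Put $M(t)=I\alpha(t)^{-1}+K(t)$. Substituting $\overline y_i(s)=H_i\theta+\overline w_i(s)$ into $x_i(s)=H_i^\top C_i^{-1}\overline y_i(s)$ gives $x_i(\tau_i(t))=B_i\theta+H_i^\top C_i^{-1}\overline w_i(\tau_i(t))$, hence $d(t)=K(t)\theta+v(t)$ with $v(t)=\sum_{i=1}^n H_i^\top C_i^{-1}\overline w_i(\tau_i(t))\IND{i\in S(t)}$, so that
$$\test(t)-\theta=M(t)^{-1}v(t)-\alpha(t)^{-1}M(t)^{-1}\theta .$$
Expanding the outer product, $t\,\EX{(\test(t)-\theta)(\test(t)-\theta)^\top}$ splits into the main term $t\,\EX{M(t)^{-1}v(t)v(t)^\top M(t)^{-1}}$, two cross terms carrying a prefactor $t\alpha(t)^{-1}$, and a term carrying $t\alpha(t)^{-2}$.

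\textbf{Conditional second moment of $v(t)$.} Because the token process is independent of the noise, and the noise is temporally independent, spatially uncorrelated and has $\var(w_i(t))=C_i$, conditioning on $\mathcal{M}_t$ (which fixes every $\tau_i(t)$ and $\IND{i\in S(t)}$) yields
$$\CEX{v(t)v(t)^\top}{\mathcal{M}_t}=\sum_{i=1}^n\frac{B_i}{\tau_i(t)+1}\,\IND{i\in S(t)}\ \preceq\ \Sigma_c ,$$
the last relation because each coefficient lies in $[0,1]$; in particular $\EX{\norm{v(t)}^2}\le\mathrm{tr}(\Sigma_c)$ and $t\,\EX{v(t)v(t)^\top}=\sum_i B_i\,t\,\EX{\IND{i\in S(t)}/(\tau_i(t)+1)}$. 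The crux is the claim $t\,\EX{\IND{i\in S(t)}/(\tau_i(t)+1)}\to1$ for each $i$. Writing $\frac{t}{\tau_i(t)+1}\IND{i\in S(t)}=\IND{i\in S(t)}+\frac{(t-\tau_i(t)-1)\IND{i\in S(t)}}{\tau_i(t)+1}$, the first term has expectation $\PR{i\in S(t)}\to1$ by Lemma~\ref{lem:tailforvariables}; for the remainder one splits on $\{\tau_i(t)\ge t/2\}$, where it is at most $2(t-\tau_i(t))/t$ and $\EX{(t-\tau_i(t))\IND{i\in S(t)}}\le\sum_{s\ge0}c_1\E^{-c_2 s}$ is bounded uniformly in $t$ (third bullet of Lemma~\ref{lem:tailforvariables}), versus $\{\tau_i(t)<t/2,\ i\in S(t)\}$, where it is at most $t$ and the event has probability $\le c_1\E^{-c_2(t/2-1)}$. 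Both contributions vanish, so $t\,\EX{v(t)v(t)^\top}\to\Sigma_c$, and in particular $\EX{\norm{v(t)}^2}=O(1/t)$.

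\textbf{Bounding $M(t)^{-1}$ and collecting the terms.} On $\{S(t)=\mathcal{V}\}$ we have $K(t)=\Sigma_c$, so $M(t)^{-1}=(I\alpha(t)^{-1}+\Sigma_c)^{-1}=:N(t)$ is deterministic with $\norm{N(t)}\le\norm{\Sigma_c^{-1}}$ and $N(t)\to\Sigma_c^{-1}$; on the complement only $\norm{M(t)^{-1}}\le\alpha(t)$ holds, but $\PR{S(t)\ne\mathcal{V}}\le nc_1\E^{-c_2 t}$. Then the $t\alpha(t)^{-2}$ term is $O(t/\alpha(t)^2)+O(t\,\PR{S(t)\ne\mathcal{V}})\to0$; each cross term is $O(\sqrt t/\alpha(t))$ on $\{S(t)=\mathcal{V}\}$ (using $\EX{\norm{v(t)}}\le\big(\EX{\norm{v(t)}^2}\big)^{1/2}=O(t^{-1/2})$ and $t/\alpha(t)^2\to0$) and $O\big((t\alpha(t)^2\E^{-c_2 t})^{1/2}\big)$ on the complement (using $\norm{M(t)^{-1}}\le\alpha(t)$, Cauchy--Schwarz, and $\EX{\norm{v(t)}^2\IND{S(t)\ne\mathcal{V}}}=\EX{\CEX{\norm{v(t)}^2}{\mathcal{M}_t}\IND{S(t)\ne\mathcal{V}}}\le\mathrm{tr}(\Sigma_c)\PR{S(t)\ne\mathcal{V}}$), so both conditions on $\alpha(t)$ make these vanish. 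Finally, the main term on $\{S(t)\ne\mathcal{V}\}$ is $\preceq t\alpha(t)^2\,\mathrm{tr}(\Sigma_c)\,nc_1\E^{-c_2 t}\,I\to0$, while on $\{S(t)=\mathcal{V}\}$ it equals $N(t)\big(t\,\EX{v(t)v(t)^\top}\big)N(t)-t\,N(t)\,\EX{v(t)v(t)^\top\IND{S(t)\ne\mathcal{V}}}\,N(t)$; the first piece converges to $\Sigma_c^{-1}\Sigma_c\Sigma_c^{-1}=\Sigma_c^{-1}$ by the previous step, and the second vanishes since $\EX{v(t)v(t)^\top\IND{S(t)\ne\mathcal{V}}}\preceq\Sigma_c\,\PR{S(t)\ne\mathcal{V}}$ and $t\,\PR{S(t)\ne\mathcal{V}}\to0$. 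Adding the four contributions gives $t\,\EX{(\test(t)-\theta)(\test(t)-\theta)^\top}\to\Sigma_c^{-1}$.

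\textbf{Main obstacle.} The delicate point is promoting the almost-sure fact $\tau_i(t)/t\to1$ to the $L^1$ statement $t\,\EX{\IND{i\in S(t)}/(\tau_i(t)+1)}\to1$; this is exactly what the exponential tail bounds of Lemma~\ref{lem:tailforvariables} are for, and without them the dominant term cannot be pinned down. The case split according to whether every node has already been visited — unavoidable because $M(t)^{-1}$ is only $O(\alpha(t))$ while $S(t)\ne\mathcal{V}$ — is what forces the two growth conditions imposed on $\alpha(t)$.
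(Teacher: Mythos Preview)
Your overall strategy coincides with the paper's: decompose $\test(t)-\theta$ into a noise-driven piece and a bias piece, take conditional expectation with respect to the token path $\mathcal{M}_t$, and control everything with the exponential tails of Lemma~\ref{lem:tailforvariables}. Your decomposition $\test(t)-\theta=M(t)^{-1}v(t)-\alpha(t)^{-1}M(t)^{-1}\theta$ is in fact tidier than the paper's (which keeps $\sum_i(x_i(\tau_i(t))\IND{i\in S(t)}-B_i\theta)$ as one block and then has to peel off a $U_2$ term carrying $\IND{i\not\in S(t)}\IND{j\not\in S(t)}$). Your proof that $t\,\EX{\IND{i\in S(t)}/(\tau_i(t)+1)}\to1$ is the same idea as the paper's $b(t)=\sqrt t$ split, with the threshold moved to $t/2$.

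There is, however, a genuine slip in your cross-term estimate on $\{S(t)\ne\mathcal V\}$. Your Cauchy--Schwarz computation gives
\[
t\,\alpha(t)^{-1}\cdot\alpha(t)^2\cdot\big(\EX{\norm{v(t)}^2\IND{S(t)\ne\mathcal V}}\big)^{1/2}
\;\le\; C\,t\,\alpha(t)\,\E^{-c_2 t/2},
\]
which equals $\sqrt t\cdot\big(t\alpha(t)^2\E^{-c_2 t}\big)^{1/2}$, not $\big(t\alpha(t)^2\E^{-c_2 t}\big)^{1/2}$ as you wrote. The extra $\sqrt t$ matters: take $\alpha(t)^2=\E^{c_2 t}/(t\log t)$, which satisfies both hypotheses on $\alpha$, yet $t\alpha(t)\E^{-c_2 t/2}=\sqrt{t/\log t}\to\infty$. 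So this bound does not close the argument for all admissible $\alpha$. The fix is immediate and is essentially what the paper does: since $\CEX{v(t)}{\mathcal M_t}=0$ and $M(t)^{-1},\IND{S(t)\ne\mathcal V}$ are $\mathcal M_t$-measurable, the cross terms $-t\alpha(t)^{-1}\EX{M(t)^{-1}v(t)\theta^\top M(t)^{-1}}$ vanish identically (on both events). Once you use this, the rest of your argument goes through unchanged.
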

\begin{proof}
We define $\widetilde{K}(t) = \left( I \alpha(t)^{-1} + \sum_{i=1}^N A_n \IND{n \in S_t} \right) $ and write
\begin{align*}
& s(t) - \theta \\
& = \widetilde{K}(t)^{-1} \left( \sum_{i=1}^n H_i^\top C_i^{-1} \overline{y}_i(\tau_i(t))\IND{i \in S(t)}\right) - \theta  \\
& = \widetilde{K}(t)^{-1} \left( \sum_{i=1}^n H_i^\top C_i^{-1} \overline{y}_i(\tau_i(t))\IND{i \in S(t)} - H_i^\top C_i^{-1} H_i \theta \right) \\
& \qquad\qquad\qquad + \left( \widetilde{K}(t)^{-1} \sum_{i=1}^n H_i^\top C_i^{-1} H_i - I \right) \theta  \\ 
& = \widetilde{K}(t)^{-1} \left( \sum_{i=1}^n x_i(\tau_i(t))\IND{i \in S(t)} - B_i \theta \right) \\
& \qquad\qquad\qquad + \left( \widetilde{K}(t)^{-1}\Sigma_c - I \right) \theta.
\end{align*}
Then we can write
\begin{align}
& t(s(t) - \theta)(s(t) - \theta)^\top  \nonumber\\
& = t \widetilde{K}(t)^{-1} \left( \sum_{i=1}^n x_i(\tau_i(t))\IND{i \in S(t)} - B_i \theta \right) \nonumber \\
& \qquad\qquad \left( \sum_{i=1}^n x_i(\tau_i(t))\IND{i \in S(t)} - B_i \theta \right)^\top \widetilde{K}(t)^{-1}  \label{eq:firstterm}\\
& + t \widetilde{K}(t)^{-1} \left( \sum_{i=1}^n x_i(\tau_i(t))\IND{i \in S(t)} - B_i \theta \right)  \theta^\top \nonumber\\ 
& \qquad\qquad\qquad\qquad \left( \widetilde{K}(t)^{-1}\Sigma_c - I \right)^\top   \label{eq:secondterm}\\
& + t \left( \widetilde{K}(t)^{-1}\Sigma_c - I \right) \theta  \left( \sum_{i=1}^n x_i(\tau_i(t))\IND{i \in S(t)} - B_i \theta \right)^\top \nonumber\\
& \qquad\qquad\qquad\qquad\qquad\qquad \widetilde{K}(t)^{-1}  \nonumber\\
& + t \left( \widetilde{K}(t)^{-1}\Sigma_c - I \right) \theta \theta^\top \left( \widetilde{K}(t)^{-1}\Sigma_c - I \right)^\top  \label{eq:fourthterm} \\ 
& = U(t) + R_1(t) + R_1(t)^\top + R_2(t) \nonumber
\end{align}
where we introduced  the variables $U(t)$, $R_1(t)$ and $R_2(t)$ corresponding to the terms \eqref{eq:firstterm}, \eqref{eq:secondterm} and \eqref{eq:fourthterm} respectively to simplify notation. We will work with each term separately, starting with $U(t)$ and showing that its expected value converges to $\Sigma_c^{-1}$. Then in appendix \ref{ap:residuals} we show that the remaining terms $R_1(t)$ and $R_2(t)$ go in expectation to $0$. 

We can begin by computing the expected value of $U(t)$ given the filtration $\mathcal{M}_t$, $\CEX{U(t)}{\mathcal{M}_t}$. In order to do that, we evaluate for  $i,j \in \mathcal{V}$ the expression
\begin{align}
&\mathbb{E}\biggl[ \left( x_i(\tau_i(t))\IND{i \in S(t)} - B_i \theta \right) \nonumber \\ \label{eq:expression}
& \qquad\qquad\qquad \left( x_j(\tau_j(t))\IND{j \in S(t)} - B_j \theta \right)^\top \biggr\rvert \mathcal{M}_t \biggr]
\end{align}
We note, $\tau_i(t)$,$\tau_j(t)$ and $\IND{i \in S(t)},\IND{j \in S(t)}$ are all $\mathcal{M}_t-$measurable, and $(x_i(t))_{t \geq 0}, (x_j(t))_{t \geq 0}$ are both independent of $\mathcal{M}_t$. Since for any $i \in \mathcal{V}$, 
$
\EX{x_i(t) - B_i \theta} = 0$
and
$$
\EX{\left( x_i(t) - B_i \theta \right) \left( x_i(t) - B_i \theta \right)^\top} = \frac{H_i^\top C_i^{-1} H_i}{t}
$$
the expression \eqref{eq:expression} evaluates as $B_i \theta \theta^\top B_j $ if $\left\{ i \not \in S(t) \wedge j \not \in S(t) \right\}$, $0$ if $i \neq j $ and $\{i \in S(t) \vee j \in S(t)\}$ since $x_i(t)$ and $x_j(t)$ are independent, and as  
$
\frac{H_i^\top C_i^{-1} H_i}{\tau_i(t)}
$
if $i = j$ and $\left\{i \in S(t)\right\}$. Thus, we have
\begin{align}
\label{eq:cexfixedpath}
\MoveEqLeft \CEX{U(t)}{\mathcal{M}_t}  \nonumber \\
& = t \widetilde{K}(t)^{-1} \sum_{i=1}^n \frac{H_i^\top C_i^{-1} H_i}{\tau_i(t)} \IND{n \in S(t)} \widetilde{K}(t)^{-1}  \\
&  + \widetilde{K}(t)^{-1} t \sum_{i,j \in \mathcal{V}} B_i \theta \theta^\top B_j \IND{i,j \not \in S(t)} \widetilde{K}(t)^{-1} \label{eq:cexfix2} \\ 
& = U_1(t) + U_2(t) \nonumber
\end{align}
where we associate $U_1(t)$, $U_2(t)$ with the expressions in \eqref{eq:cexfixedpath}, \eqref{eq:cexfix2} respectively to simplify notation.

First we show that the expectation of $U_2(t)$ converges to zero. We can establish that
\begin{align*}
\norm{ \widetilde{K}(t)^{-1}} & = \norm{ \left(I\alpha(t)^{-1} + \sum_{i=1}^n B_i \IND{i \in S(t)}\right)^{-1}} \\
& = \rho\left(\left(I\alpha(t)^{-1} + \sum_{i=1}^n B_i \IND{i \in S(t)}\right)^{-1}  \right) \\
&\leq \alpha(t)
\end{align*}
where we note that the norm equals the spectral radius since the matrix is symmetric, and in establishing the last bound we used the property that for any real $k$ we have
\begin{equation}
\label{eq:sumI}
\lambda \in \eig(A) \iff (k + \lambda) \in \eig(k I + A),
\end{equation}
and noting that the matrices $B_i$ are all positive semidefinite. Then we can write
\begin{align*}
 \norm{U_2(t)} & \leq t \norm{\widetilde{K}(t)^{-1}}^2 \norm{\sum_{i,j \in \mathcal{V}} B_i \theta \theta^\top B_j} \IND{S(t) \neq \mathcal{V}}  \\
& \leq t \alpha(t)^2 \norm{\sum_{i,j \in \mathcal{V}} B_i \theta \theta^\top B_j} \IND{S_t \neq \mathcal{V}}.
\end{align*} 
Then we have
\begin{align}
\label{eq:normedterm}
\MoveEqLeft \EX{\norm{\CEX{U_2(t)}{\mathcal{M}_t}}}  \nonumber \\
&  \qquad \leq t \alpha(t)^2 \norm{\sum_{i,j \in \mathcal{V}} B_i \theta \theta^\top B_j} \PR{S(t) \neq \mathcal{V}}.
\end{align}
In view of lemma \ref{lem:tailforvariables}, $\PR{S(t) \neq \mathcal{V}} \leq n c_1 \E^{-c_2t}$, and given our assumptions on the chosen sequence $\alpha(t)$, we can take the limit on \eqref{eq:normedterm} to conclude $\lim_{t \rightarrow +\infty}\EX{\norm{\CEX{U_2(t)}{\mathcal{M}_t}}} = 0$, from which follows that $\lim_{t \rightarrow +\infty} \EX{U_2(t)} = 0$.

Now we  will show that $\lim_{t \rightarrow +\infty}\EX{U_1(t)} = \Sigma_c^{-1}$. We write
\begin{align*}
\EX{U_1(t)} & = \CEX{U_1(t)}{S(t) = \mathcal{V}}\PR{S(t) = \mathcal{V}}  \\ 
& \qquad + \CEX{U_1(t)}{S(t) \neq \mathcal{V}}\PR{S(t) \neq \mathcal{V}}  
\end{align*}
and show that the second term has limit $0$. We can do this in a similar way as for the term $U_2$, by upper bounding the norm as 
$
\norm{\CEX{U_1(t)}{S(t) \neq \mathcal{V}}\PR{S(t) \neq \mathcal{V}}} \leq t \norm{\widetilde{K}(t)^{-1}}^2 \norm{\Sigma_c}\PR{S(t) \neq \mathcal{V}}
$
and given the results from lemma \ref{lem:tailforvariables} this establishes an upper bound that decays to $0$. Finally, consider the term 
\begin{align*}
&\CEX{U_1(t)}{S(t) = \mathcal{V}} = \left(I\alpha(t)^{-1} + \Sigma_c\right)^{-1} \\ & \qquad\qquad \sum_{i=1}^n B_i \CEX{\frac{t}{\tau_i(t)}}{S(t) = \mathcal{V}} \left(I\alpha(t)^{-1} + \Sigma_c\right)^{-1}. 
\end{align*}
We will show that $\lim_{t\rightarrow +\infty}\CEX{\frac{t}{\tau_i(t)}}{S(t) = \mathcal{V}} = 1$, from which will follow that
\begin{align*}
\lim_{t\rightarrow +\infty} \EX{U_1(t)} = \Sigma_c^{-1} \left( \sum_{i=1}^n B_i \right) \Sigma_c^{-1} = \Sigma_c^{-1}. 
\end{align*}

Since if $S(t) = \mathcal{V}$ we have that $\frac{t}{\tau_i(t)} \geq 1$, we just need to show that $$\limsup_{t\rightarrow +\infty}\CEX{\frac{t}{\tau_i(t)}}{S(t) = \mathcal{V}} \leq 1. $$ We can write for any positive $b < t$
\begin{align}
& \CEX{\frac{t}{\tau_i(t)}}{S(t) = \mathcal{V}}   \nonumber \\
& = \CEX{\frac{t}{\tau_i(t)} }{\tau_i(t) \geq t - b,~S(t) = \mathcal{V}} \PR{\tau_i(t) \geq t - b}  \nonumber\\
&  \qquad + \CEX{\frac{t}{\tau_i(t)}}{\tau_i(t) < t - b,~S(t) = \mathcal{V}} \PR{\tau_i(t) < t - b}  \nonumber\\
& \leq \frac{t}{t-b + 1} + t \PR{\tau_i(t) < t - b}. \label{eq:nonvisitbound}
\end{align}
We have that 
\begin{align*}
\MoveEqLeft \PR{\tau_i(t) < t - b}  \\
& \; = \PR{p_{t-b} \neq i \wedge p_{t-b+1} \neq i \wedge \ldots \wedge p_{t} \neq i}  \\
& \;  \leq  c_1 \E^{-c_2 b}
\end{align*}
for some positive constants $c_1,c_2 > 0$, where we used in the last inequality the result from lemma \ref{lem:tailforvariables}. We can then bound \eqref{eq:nonvisitbound} as
\begin{align}
\label{eq:limittobetaken}
\EX{\frac{t}{\tau_i(t)} \IND{i \in S(t)}} & \leq \frac{t}{t-b+1} + t c_1  \E^{-c_2 b} \quad.
\end{align}
Since this is valid for any $t> b > 0$, we can choose for each $t$, $b(t) = \sqrt{t}$. Then by taking the limit on \eqref{eq:limittobetaken} we have 
\begin{align*}
\limsup_{t \rightarrow +\infty} \EX{\frac{t}{\tau_i(t)} \IND{i \in S(t)}} \leq 1 .
\end{align*}
\end{proof}

\section{Simulation}
\label{sec:simulation} 

In this section we implement our algorithm and test its performance by means of a numerical simulation. We will focus exclusively on random, $\textit{i.i.d.}$ network instantiations, and perform comparisons with the \textit{consensus+innovations} algorithm \cite{kar2011convergence, kar2013consensus+, kar2013distributed}.

\paragraph*{Choice of Markov chain}

In order to implement our algorithm we need to choose the weights of the Markov chain, which correspond to the choice of function $Q(.)$. According to the results of theorem \ref{thm:theorem} any choice satisfying assumption \ref{asp:matrixQ} will guarantee asymptotic convergence to the central estimator variance. Naturally, for finite $t$, the convergence speed for a particular network setting will depend on the choice of $Q(.)$. It is outside of the scope of this paper to examine the performance of our algorithm in a finite time window as a dependency of the choice of $Q(.)$, and instead we focus on showing empirically that for some choice of transition matrix the algorithm performs well. In our tests we always use a Markov chain with transition probabilities equal to the reciprocal of the out-degree of the agent, so that for each $i$, $Q(A)_{ij} = \degree(A,i)^{-1} = (e_i A \one)^{-1}$, for all $j$ such that $A_{ij} > 0$. This choice of weights is appropriate for a distributed setting, as all an agent needs to know is the number of neighbours it has in order to compute the transition probabilities. 

\paragraph*{Consensus+innovations}

We will be comparing our algorithm with a \textit{consensus+innovations} type algorithm. Specifically, we will consider that its iterations are given by 
\begin{align*}
s_i(t+1) & = s_i(t) - \beta(t) \left(\sum_{l \in \Omega_i(t)} s_i(t) - s_l(t) \right) \\ & + \alpha(t) K_i(t) H_i^\top C_i^{-1} \left( y_i(t) - H_i s_i(t) \right),
\end{align*}
where $\alpha(t) = \frac{a}{(t+1)^{\tau_1}}$ and $\beta(t) = \frac{b}{(t+1)^{\tau_2}}$, $\Omega_i(t)$ denotes the neighbours of $i$ at time $t$. The gain $K_i(t)$ is computed by means of another iterative procedure \cite{kar2013consensus+}. For an appropriate choice of $\tau_1$,$\tau_2$ and $a$, $s_i(t)$ is guaranteed to have an asymptotic mean square error (MSE) equal to the central estimator; however, we don't have a way of determining optimal values for these constants. In our tests we did a grid search to find for each setting, the set of parameters with best performance, while guaranteeing that they stay in the range that guarantees asymptotic optimality.

In the \textit{consensus+innovations} algorithm, all agents have a local estimate that is constantly updated, whereas in our algorithm only one estimate exists in the network, localized at the node that is currently carrying the token. In a practical scenario, it may be required that all agents have a online estimate during the running time of the algorithm, at all times. In this case, we can extend our algorithm by having each agent save the estimate from the last time it was visited by the token. Using our notation, this would mean that each agent $i$ has a saved estimate equal to $s(\tau_i(t))$ at time $t$. 

In our tests we will be comparing the mean square error (MSE) between the different algorithms. We look at the three quantities\cmt{needs to be edited in view of comment: 'write these as three equalities, each with the proper acronym in the left-hand side' - is it worthwhile defining three acronyms that we only use once?} 
\begin{align*}
\frac{1}{n}\frac{\sum_{i=1}^n \EX{\norm{s_i(t) - \theta}^2}}{\norm{s_i(0) - \theta}^2}, \;\; \frac{ \EX{\norm{s(t) - \theta}^2}}{\norm{s(0) - \theta}^2} 
\end{align*}
and
\begin{align*}
\EX{\frac{1}{|S(t)|} \frac{\sum_{i=1}^n \norm{s(\tau_i(t)) - \theta}^2}{\norm{s(0) - \theta}^2}},
\end{align*}
respectively, the relative mean square error (r-MSE) of a network running a \textit{consensus+innovations} algorithm, the r-MSE of the token estimate, and the r-MSE of a network that runs the token algorithm where each agent saves the last estimate seen.  

\paragraph*{Experimental results}

In order to test our algorithm, we randomly generate two geometric graphs of size $n = 20$ and $n = 50$, with relative degrees $0.12$ and $0.09$ respectively. In each case, we consider that they represent the backbone of our network, and at any time instant, a communication link can fail with a fixed probability $0.5$. We will test each network under two different measurement models for agents. Under model $A$, we let $L = \frac{n}{4}$ (rounded to the closest integer) and have $ \theta = [1 \ldots L]^\top $, and $H_i \in \mathbb{R}^{1 \times L}$, with entries randomly generated from a normal distribution. Under model $B$, $\theta = [1\; 2\; \ldots\; n ]^\top$, and $H_i \in \mathbb{R}^{1 \times n}$, with entries taken randomly from a normal distribution. In both cases, the noise tested is simple Gaussian noise with variance $1$, independent between the agents, $w_i(t) \sim \mathcal{N}(0,1)$. 
\begin{figure}[!t]
\centering
\includegraphics[trim={1cm 1.1cm 0.9cm 0.8cm},clip,width=3in]{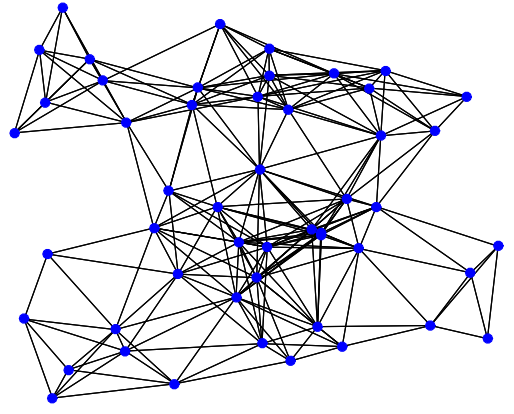}
\caption{Geometric network with $n=50$ used as the backbone network in our tests.}
\label{fig:geonetwork}
\end{figure}
For illustration purposes, we present in \figurename \ref{fig:geonetwork} the $50$ node backbone network. 

In \figurename \ref{fig:20nodesnonobservable} and \figurename \ref{fig:50nodes_nonobervablerandom} we present the mean-square error at each iteration time, for the $20$ node network and the $50$ node network respectively, considering in each case both measurement models A and B. The relative mean square error was obtained by averaging the square error over multiple runs.  We note that our algorithm showed in all tested cases  significant improvement with respect to the \textit{consensus+innovations}, both when comparing the token carried estimate to the consensus+innovations network average, and when comparing the network average. Further, we note our algorithm has the advantage of not requiring the tuning of any parameters, and uses less communication resources per iteration than \textit{consensus+innovations}. 

\begin{figure}[!t]
\centering
\includegraphics[trim={0.4cm 0.1cm 0cm 0.2cm},clip,width=3.5in]{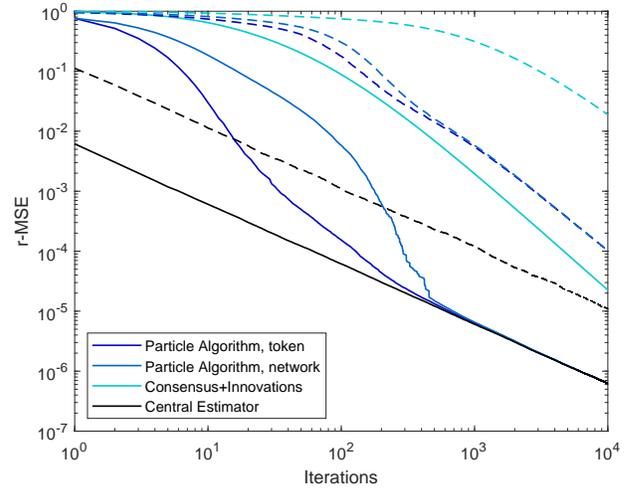}
\caption{Relative mean square error for the $20$ node network. Continuous line correspond to model A, dashed line correspond to model B.}
\label{fig:20nodesnonobservable}
\end{figure}
\begin{figure}[!t]
\centering
\includegraphics[trim={0.4cm 0.1cm 0cm 0.2cm},clip,width=3.5in]{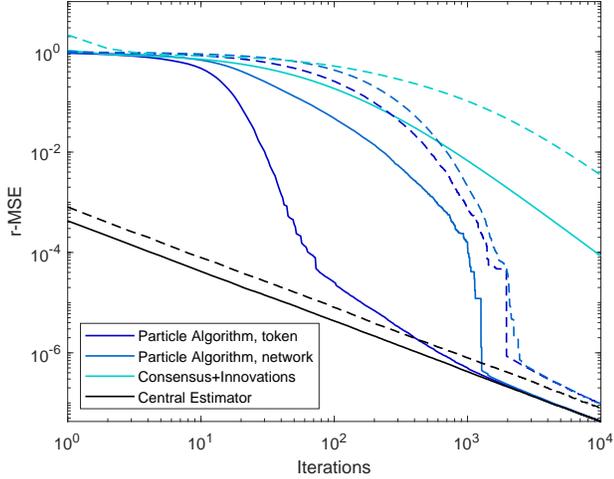}
\caption{Relative mean square error for the $50$ node network. Continuous line correspond to model A, dashed line correspond to model B.}
\label{fig:50nodes_nonobervablerandom}
\end{figure}

\section{Other assumptions on the network instantiations}
\label{sec:networkB}

In this section we explain how the proof for consistency and asymptotic optimality of our estimator can be extended to other network connection settings, and in particular we consider the following.
\begin{assumption}
\label{asp:networkB}
The sequence $(A(t))_{t \geq 0}$ is deterministic, with the property that for some positive integer $b$, the graph $\bigcup_{t \leq k < t + b} \mathcal{G}(A(k))$ is strongly connected, for all $t$. 
\end{assumption}
\noindent This assumption appears in a wide range of other works in distributed algorithms, and in particular in \cite{ram2009incremental,ram2010distributed}. 

According to lemma \ref{lem:consistencyinfinitelly}, in order to show consistency all we need is to show that for an appropriate choice of $Q(.)$, all nodes are visited infinitely often. This is explained below in theorem \ref{thm:theoremconsistencyB}. 

In order to show asymptotic optimality, the only part of the proof of theorem \ref{thm:theorem} that depended on the network instantiations was captured in the bounds established in lemma \ref{lem:tailforvariables}. These bounds depend exclusively on the result established on lemma \ref{lem:expontentialtaillemma}, namely, that given our network instantiations, and the choice of function $Q(.)$, we could guarantee that for some $m,\epsilon$, the inequality 
\begin{equation}
\label{eq:boundonT}
\CPR{T^{t_0}_E \leq t + m}{\mathcal{F}_t} \geq \epsilon
\end{equation}
hold, for all $t \geq t_0$. Hence, it is sufficient to show that this bound holds to show asymptotic optimality of our estimator. 


We will now show how to obtain a bound like \eqref{eq:boundonT} when the network instantiations satisfy assumption \ref{asp:networkB}. So consider that the sequence $(A(t))_{t \geq 0}$ is deterministic. Let $(G(t))_{t \geq 0}$ denote the corresponding graph sequence, so that $G(t) = \mathcal{G}(A(t))$. We will now introduce some definitions.

We will say that a sequence $(G(t))_{t_0 \leq t < t_0+m}$ has a sequential path connecting $i$ to $j$ if for $m' \leq m$ there is sequence of edges $((i_k,i_{k+1}))_{0 \leq k < m'}$ with $(i_k,i_{k+1}) \in G(t_0 + k)$, and $i_0 = i$, $i_{m'} = j$. In words, if for the pair $(i,j)$ there is a sequence of edges appearing in succession that connects them. If such a path only exist when we allow for self-loops, \textit{i.e.} a sequential path exist when we allow for $i_k = i_{k+1}$ for some $k$, then we say the sequence has a sequential path with self-loops. 

Consider a sequence $(G(t))_{t_0 \leq t < t_0+m}$. We say it is sequentially connected if for every pair of nodes $(i,j)$, the sequence has a sequential path connecting them. We say the sequence is sequentially connected with self-loops if for every pair $(i,j)$ it has a sequential path when self-loops are allowed. 

We state the following lemma, which relates the definitions just introduced and 
the bound \eqref{eq:boundonT}. 
\begin{lemma}
\label{lem:exponentiallemmanetworkB}
Suppose that $Q(.)$ is chosen so as to satisfy assumption \ref{asp:matrixQ}, and that for all $t_0 \geq 0$ either
\begin{itemize}
\item the sequence $(G(t))_{t_0 \leq t < t_0+m}$ is sequentially connected, or
\item the sequence $(G(t))_{t_0 \leq t < t_0+m}$ is sequentially connected with self-loops, and in addition $Q(.)$ is such that $Q(A)_{ii} \geq \delta$ for all $A \in \mathcal{A}^n$, all $i \in \mathcal{V}$.  
\end{itemize}
Then, we have $\CPR{T^{t_0}_E \leq t + m}{\mathcal{F}_t} \geq \delta^m$, for any $E$ and all $t \geq t_0$. 
\end{lemma}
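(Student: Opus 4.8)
The plan is to mimic the argument of Lemma~\ref{lem:expontentialtaillemma}, replacing the ``irreducible Markov chain has a path of length $\leq n$'' fact (which came from Assumption~\ref{asp:networkA}) with the sequential-connectivity hypothesis now being assumed. Fix $t_0$, fix a non-empty set $E$, pick once and for all some target node $j_\star \in E$, and fix $t \geq t_0$. As in the proof of Lemma~\ref{lem:expontentialtaillemma}, I would condition on the value of $p(t)$: writing $i$ for the (random) current position,
\begin{align*}
& \CPR{T^{t_0}_E \leq t + m}{\mathcal{F}_t} \\
& \geq \CPR{\{p(t+1) \in E\} \cup \cdots \cup \{p(t+m) \in E\}}{\mathcal{F}_t} \\
& = \sum_{i \in \mathcal{V}} \CPR{\{p(t+1) \in E\} \cup \cdots \cup \{p(t+m) \in E\}}{p(t) = i} \CPR{p(t) = i}{\mathcal{F}_t},
\end{align*}
using that the token transition at step $s$ depends only on $p(s)$ and $A(s)$, and $A(s)$ for $s \geq t+1$ is (deterministic, hence) independent of $\mathcal{F}_t$ given $p(t)$. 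So it suffices to lower bound, for each fixed $i$, the probability that starting from $i$ at time $t$ the token reaches $E$ within $m$ steps, by $\delta^m$.

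For that, I would invoke the hypothesis on the window $(G(s))_{t \leq s < t+m}$. In the first case, sequential connectivity gives, for the pair $(i, j_\star)$, a sequence of edges $(i_k, i_{k+1}) \in G(t+k)$ for $0 \leq k < m'$ with $i_0 = i$, $i_{m'} = j_\star \in E$ and $m' \leq m$; note $A(t+k)_{i_k, i_{k+1}} = 1$, so $Q(A(t+k))_{i_k, i_{k+1}} \geq \delta$ by Assumption~\ref{asp:matrixQ}. Following this exact path, the probability that $p(t+k) = i_k$ for all $k \leq m'$ is at least $\prod_{k=0}^{m'-1} Q(A(t+k))_{i_k, i_{k+1}} \geq \delta^{m'} \geq \delta^m$ (using $\delta \leq 1$), and this event forces $p(t+m') \in E$. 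In the second case, the sequential path with self-loops may use steps with $i_k = i_{k+1}$; there the relevant transition probability is $Q(A(t+k))_{i_k, i_k} \geq \delta$ by the extra assumption on $Q(\cdot)$. Either way every transition along the chosen path has probability at least $\delta$, so the same product bound gives $\delta^m$, and the union over the $m$ events $\{p(t+k) \in E\}$ is at least this. Summing against $\CPR{p(t) = i}{\mathcal{F}_t}$, which sums to $1$, yields $\CPR{T^{t_0}_E \leq t+m}{\mathcal{F}_t} \geq \delta^m$ for all $t \geq t_0$, as claimed.

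The only real subtlety — and the step I would be most careful about — is the measurability/independence bookkeeping when passing to $\CPR{\cdots}{p(t) = i}$: one must use that $(A(s))_{s \geq 0}$ is deterministic here (so conditioning on $\mathcal{F}_t$ and then on $p(t)=i$ leaves the future transitions governed only by the fixed matrices $A(t+1), \ldots, A(t+m-1)$ and the Markov transitions), which is exactly why the bound becomes a clean deterministic product $\delta^m$ rather than needing an expectation over network realizations as in Lemma~\ref{lem:expontentialtaillemma}. A minor point is the edge case $m' < m$: padding is unnecessary since $\delta^{m'} \geq \delta^m$, but if one wanted a path of exactly length $m$ one could also append self-loops at $j_\star$ in the second scenario; I would just use the inequality $\delta^{m'} \geq \delta^m$ and avoid that.
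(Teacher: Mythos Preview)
Your proposal is correct and follows essentially the same route as the paper: the paper's proof is a two-sentence sketch that says the argument ``follows in the same way as'' Lemma~\ref{lem:expontentialtaillemma}, establishing $\CPR{\{p(t+1)\in E\}\cup\cdots\cup\{p(t+m)\in E\}}{p(t)=i}\geq\delta^m$ because in every length-$m$ time window there is a sequential path from $i$ to some node of $E$ whose transitions are each lower bounded by $\delta$. You have simply unpacked this, separating the two bullet cases and handling the $m'<m$ padding explicitly via $\delta^{m'}\geq\delta^m$; there is no substantive difference in strategy. One cosmetic slip: the transitions governing $p(t)\to p(t+1)$ use $A(t)$, not $A(t+1)$, so the relevant deterministic matrices are $A(s)$ for $s\geq t$, not $s\geq t+1$; your path argument already uses $G(t+k)$ starting at $k=0$, so this does not affect correctness.
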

\begin{proof}
The proof follows in the same way as the proof for lemma \ref{lem:expontentialtaillemma}. In this case, we have $\CPR{\{p(t+1) \in E \} \cup \ldots \cup \{p(t+m) \in E\}}{p(t) = i} \geq \delta^m$, for all $i$,  since we know that at any time window $[t_0, t_0 + m]$ there is a path connecting $i$ to a node in $E$, and the transition probabilities corresponding to choosing this path are all lower bounded by $\delta$.  
\end{proof}

Finally, we can present the following lemma, which relates assumption \ref{asp:networkB} with the sequence of graphs being sequentially connected.
\begin{lemma}
\label{lem:sequentiallyconnected_to}
Suppose that the sequence of graphs is such that assumption \ref{asp:networkB} hold, \textit{i.e.} for some $b$ and all $t_0$, the graph $\bigcup_{t_0 \leq t < t_0 + b} G(t)$ is strongly connected. Then, for each $t_0$, we have that the sequence $(G(t))_{t_0 \leq t < t_0+(n-1)b}$ is sequentially connected with self-loops.
\end{lemma}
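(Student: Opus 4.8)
The plan is to argue by tracking how the set of nodes reachable from a fixed start node grows as we concatenate consecutive blocks of $b$ graphs. Fix $t_0$ and a node $i$. For $k = 0, 1, \ldots$ I would let $R_k \subseteq \mathcal{V}$ be the set of nodes $j$ such that the sub-sequence $(G(t))_{t_0 \le t < t_0 + kb}$ has a sequential path with self-loops from $i$ to $j$. Then $R_0 = \{i\}$, and $R_k \subseteq R_{k+1}$ because any sequential path with self-loops can be padded at its end with self-loop edges $(j,j)$ so as to live in a longer window. The target is $R_{n-1} = \mathcal{V}$; since $i$ is arbitrary, that is exactly the claim.

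The key step is to show that if $R_k \neq \mathcal{V}$ then $|R_{k+1}| \ge |R_k| + 1$. To see this, I would look at the block of graphs indexed by $[t_0 + kb, t_0 + (k+1)b)$. By Assumption \ref{asp:networkB} its union is strongly connected, so there is a path in that union graph from some $u \in R_k$ to some $v \notin R_k$ (such $u,v$ exist because $R_k$ is a nonempty proper subset of $\mathcal{V}$). Walking along this path, let $(w, w')$ be the first edge that leaves $R_k$, i.e., $w \in R_k$ and $w' \notin R_k$; this edge belongs to $G(t')$ for some $t' \in [t_0 + kb, t_0 + (k+1)b)$. Now splice together a sequential path with self-loops from $i$ to $w$ inside $[t_0, t_0 + kb)$ (which exists since $w \in R_k$), then self-loops at $w$ for the graphs $G(t_0+kb), \ldots, G(t'-1)$, then the edge $(w,w') \in G(t')$. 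This yields a sequential path with self-loops from $i$ to $w'$ using only graphs in $[t_0, t_0+(k+1)b)$, so $w' \in R_{k+1} \setminus R_k$.

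Combining the two facts, $|R_0| = 1$ and every block that does not already exhaust $\mathcal{V}$ adds at least one node, so after $n-1$ blocks $|R_{n-1}| \ge n$, hence $R_{n-1} = \mathcal{V}$, which is the window $[t_0, t_0 + (n-1)b)$ claimed. I expect the only delicate point to be bookkeeping with the index convention for sequential paths: the edge used at step $k$ of the path must come from $G(t_0 + k)$, so one must check that inserting self-loops to wait until time $t'$ keeps every edge aligned with the correct graph in the window and that the total length stays below $(n-1)b$. Everything else is immediate from the definitions and the strong connectivity hypothesis.
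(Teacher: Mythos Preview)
Your argument is correct and follows essentially the same approach as the paper: define the set of nodes reachable from $i$ by a sequential path with self-loops after $k$ blocks of length $b$, use strong connectivity of each block's union to show this set strictly grows whenever it is proper, and conclude after $n-1$ blocks. Your write-up is in fact more explicit than the paper's about splicing the path to $w\in R_k$ with self-loops up to time $t'$ and then the crossing edge $(w,w')$, which is exactly the bookkeeping the paper leaves implicit.
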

\begin{proof}
Consider a generic $t_0$, and let $$\mathcal{E}_k = \bigcup_{t_0 + (k-1)b \leq t < t_0 + kb} \mathcal{E}_{t}, \; k=1,2,\ldots.$$ Consider a generic node $i \in \mathcal{V}$. In the set  $\mathcal{E}_1$ there must be an outward edge from $i$ to at least one node $i_1 \in \mathcal{V}\backslash\{i\}$, as otherwise $\mathcal{E}_1$ is not strongly connected. Thus we can build a path connecting $i$ to $i_1$, possibly using the self-loop in $i$. Let $V(k)$ denote the set of nodes for which we can build a sequential path (with self-loops) starting from $i$ at the end on time window $k$, and then $V(1),V(2),V(3),\ldots \supset \{i,i_1\}$. Now, suppose $V(1) \neq \mathcal{V}$. Then, at time window $k=2$, there must be an outward edge from (at least) one node in $V(1)$ to (at least) one node in $\mathcal{V}\backslash V(1)$, as otherwise the network is not strongly connected at time window $k=2$. We see that as long as $\mathcal{V}\backslash V(k) \neq \emptyset$, we have $|V(k+1)| > |V(k)|$, and thus $|V(n-1)| = n$.   
\end{proof}

Given the previous lemmas, we can establish the following theorems for a network where assumption \ref{asp:networkB} holds. 

\begin{theorem}(Estimator is consistent under assumption \ref{asp:networkB})
\label{thm:theoremconsistencyB}
Let assumptions \ref{asp:noise},\ref{asp:invert} hold, and suppose the network instantiations are such that assumption \ref{asp:networkB} holds. Suppose the function $Q(.)$ is such that assumption \ref{asp:matrixQ} holds, and additionally, that $Q(A)_{ii} \geq \delta$, for all $i$. Then the estimate $\test(t)$ is consistent.  
\end{theorem}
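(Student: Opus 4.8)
The plan is to reduce Theorem~\ref{thm:theoremconsistencyB} to the already-established Lemma~\ref{lem:consistencyinfinitelly}, exactly as Theorem~\ref{thm:consistentA} was reduced to it in the \textit{i.i.d.} case. By Lemma~\ref{lem:consistencyinfinitelly}, it suffices to show that under assumption~\ref{asp:networkB}, with a function $Q(.)$ satisfying assumption~\ref{asp:matrixQ} and the extra self-loop lower bound $Q(A)_{ii} \geq \delta$, the Markov-like token process $(p(t))_{t \geq 0}$ visits every node infinitely often with probability $1$. Once this is in hand, the remaining assumptions~\ref{asp:noise} and~\ref{asp:invert} together with $\alpha(t)^{-1} \to 0$ (which we may take as part of the standing choice of $\alpha$, or simply invoke the hypothesis of Lemma~\ref{lem:consistencyinfinitelly}) give consistency directly.

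First I would fix a node $i \in \mathcal{V}$ and show $\PR{i \not\in S(t)} \to 0$; in fact the machinery already built gives an exponential bound. By Lemma~\ref{lem:sequentiallyconnected_to}, assumption~\ref{asp:networkB} implies that for every $t_0$ the window $(G(t))_{t_0 \leq t < t_0 + (n-1)b}$ is sequentially connected with self-loops. Then Lemma~\ref{lem:exponentiallemmanetworkB}, applied with $m = (n-1)b$ and using the self-loop hypothesis $Q(A)_{ii} \geq \delta$ on $Q(.)$, yields $\CPR{T^{t_0}_E \leq t + m}{\mathcal{F}_t} \geq \delta^m$ for every non-empty $E$ and all $t \geq t_0$. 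Taking $E = \{i\}$ and $t_0 = 0$, Lemma~\ref{lem:exponentialtail} gives $\PR{i \not\in S(t)} = \PR{T^{0}_{\{i\}} > t} \leq (1-\delta^m)^{t/m - 1}$, which decays to $0$ geometrically.

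Next I would upgrade "$i \not\in S(t)$ has vanishing probability" to "$i$ is visited infinitely often almost surely." The clean route is Borel--Cantelli together with the restart bound: for the event $N_i^{t_0} = \{p(t) \neq i \text{ for all } t \geq t_0\}$ we have $N_i^{t_0} \subseteq \{T^{t_0}_{\{i\}} > t_0 + k m\}$ for every $k$, so by the same Lemma~\ref{lem:exponentialtail} estimate $\PR{N_i^{t_0}} \leq (1-\delta^m)^{k-1}$ for all $k$, hence $\PR{N_i^{t_0}} = 0$. Taking the countable union over $t_0 \in \mathbb{N}$ shows that, almost surely, $i$ is visited at arbitrarily large times, i.e. infinitely often; a further countable union over $i \in \mathcal{V}$ gives the same for all nodes simultaneously. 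Then Lemma~\ref{lem:consistencyinfinitelly} applies and we conclude $\test(t) \to \theta$ a.s.

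The main obstacle is purely a matter of bookkeeping rather than a genuine difficulty: one must be careful that assumption~\ref{asp:networkB} only guarantees connectivity of the \emph{union} graph over a window, not a single time step when a useful edge is present, which is precisely why the self-loop hypothesis $Q(A)_{ii} \geq \delta$ is needed---it lets the token "wait in place" along the sequential path promised by Lemma~\ref{lem:sequentiallyconnected_to}. Beyond that, one should note that here $(p(t))$ is time-inhomogeneous (there is no averaging as in the \textit{i.i.d.} case to produce the fixed transition matrix $\overline{Q}$), so one cannot quote "an irreducible Markov chain visits all states infinitely often"; the Borel--Cantelli argument above is the replacement, and it goes through because the one-window lower bound $\delta^m$ is uniform in $t_0$.
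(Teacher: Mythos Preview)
Your proposal is correct and follows the paper's strategy almost exactly: reduce to Lemma~\ref{lem:consistencyinfinitelly}, use Lemma~\ref{lem:sequentiallyconnected_to} to get sequential connectivity with self-loops over windows of length $m=(n-1)b$, and then invoke Lemma~\ref{lem:exponentiallemmanetworkB} for the uniform lower bound $\delta^m$ on hitting within one window.

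The only difference is in the last step. The paper partitions time into blocks $[km,(k{+}1)m]$, sets $E^i_k=\{p(km)=i\}\cup\cdots\cup\{p((k{+}1)m)=i\}$, observes $\CPR{E^i_k}{\mathcal{F}_{(k-1)m}}\geq\delta^m$, and appeals to L\'evy's extension of Borel--Cantelli to conclude $E^i_k$ occurs infinitely often. You instead show directly that $\PR{N_i^{t_0}}=0$ for each $t_0$ via the tail bound from Lemma~\ref{lem:exponentialtail}, then take a countable union. Your route is slightly more elementary (it avoids the conditional Borel--Cantelli lemma altogether), while the paper's route makes the block structure explicit; both exploit exactly the same uniform-in-$t_0$ bound, and your remark that one cannot simply quote irreducibility here because $(p(t))$ is time-inhomogeneous is on point.
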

\begin{proof}
Given lemma \ref{lem:consistencyinfinitelly} we just need to show that each agent is visited infinitely often. Given lemma \ref{lem:sequentiallyconnected_to} we have that the results of lemma \ref{lem:exponentiallemmanetworkB} hold with $m = (n-1)\overline{b}$. For a node $i$, consider the sequence of events $E^i_{k} = \{p(km) = i\} \cup \ldots \cup \{p((k+1)m) =i\}$, for $k \geq 0$.  
Define $\mathcal{T}_k = \mathcal{F}_{km}$, and then $E^i_k \in \mathcal{T}_k$. Further, we have for $k \geq 1$, $\CPR{E^i_k}{\mathcal{T}_{k-1}} = \CPR{T^{(k-1)m}_{\{i\}} \leq (k-1)m + m}{\mathcal{F}_{(k-1)m}} > \epsilon$, with $\epsilon$ as given by lemma \ref{lem:exponentiallemmanetworkB}. Hence $\sum_{k} \CPR{E^i_k}{\mathcal{T}_{k-1}} = +\infty$ and by Lévy's extension of the Borel-Cantelli Lemmas (see \cite{williams1991probability} theorem $12.15$) it follows that the sequence of events $E^i_k$ occurs infinitely often, from which follows that node $i$ is visited infinitely often.   
\end{proof}

\begin{theorem}(Estimator is asymptotically optimal under assumption \ref{asp:networkB})
\label{thm:theoremB}
Let assumptions \ref{asp:noise},\ref{asp:invert} hold, and suppose the network instantiations are such that assumption \ref{asp:networkB} holds. Suppose the function $Q(.)$ is such that assumption \ref{asp:matrixQ} holds, and additionally, that $Q(A)_{ii} \geq \delta$, for all $i$. Suppose the sequence $\alpha(t)$ is chosen so that 
\begin{align*}
\lim_{t \rightarrow +\infty} \frac{t \alpha(t)^2}{\E^{c_2 t}} = 0, \;\; \lim_{t \rightarrow +\infty} \frac{t}{\alpha(t)^2} = 0, 
\end{align*}
where $c_2 = \frac{\log(1-\delta^{(n-1)b})}{(n-1)b}$. Then estimator $s(t)$ is asymptotically optimal, so that we have $$\lim_{t \rightarrow +\infty} t \EX{(\test(t) - \theta)(\test(t) - \theta)^\top} = \Sigma_c^{-1}.$$
\end{theorem}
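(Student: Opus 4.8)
The plan is to reduce the statement to Theorem \ref{thm:theorem} by locating the single place where that proof touched the network model. Going through the proof of Theorem \ref{thm:theorem}, the law of $(A(t))_{t\ge0}$ enters only via Lemma \ref{lem:tailforvariables}: the algebraic splitting $t(\test(t)-\theta)(\test(t)-\theta)^\top = U(t)+R_1(t)+R_1(t)^\top+R_2(t)$, the evaluation of $\CEX{U(t)}{\mathcal{M}_t}$ through Assumption \ref{asp:noise} and the independence of $(x_i(t))_{t\ge0}$ from $\mathcal{M}_t$, the spectral bound $\norm{\widetilde{K}(t)^{-1}}\le\alpha(t)$, and the residual estimates of Appendix \ref{ap:residuals} are all insensitive to how the adjacency matrices arise. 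Moreover that independence still holds here, because under Assumption \ref{asp:networkB} the token path is a function of the deterministic $A(t)$ together with auxiliary randomization that is independent of $(w(t))_{t\ge0}$. So it suffices to re-establish the three tail bounds of Lemma \ref{lem:tailforvariables} in the present setting.

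To obtain them, I would first produce the exponential-tail input \eqref{eq:boundonT}. Lemma \ref{lem:sequentiallyconnected_to} converts Assumption \ref{asp:networkB} into the statement that for every $t_0$ the window $(G(t))_{t_0\le t<t_0+(n-1)b}$ is sequentially connected with self-loops; since by hypothesis $Q(\cdot)$ satisfies Assumption \ref{asp:matrixQ} and $Q(A)_{ii}\ge\delta$ for all $A,i$, the second case of Lemma \ref{lem:exponentiallemmanetworkB} gives $\CPR{T^{t_0}_E\le t+m}{\mathcal{F}_t}\ge\delta^{m}$ for every non-empty $E$ and all $t\ge t_0$, with $m=(n-1)b$ and $\epsilon=\delta^{(n-1)b}$. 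Feeding this into Lemma \ref{lem:exponentialtail} exactly as in the proof of Lemma \ref{lem:tailforvariables} reproduces $\PR{i\notin S(t)}\le c_1\E^{-c_2 t}$, $\PR{S(t)\neq\mathcal{V}}\le nc_1\E^{-c_2 t}$ and $\PR{p(t_0)\neq i,\ldots,p(t_0+t)\neq i}\le c_1\E^{-c_2(t-t_0)}$, with $c_1=(1-\delta^{(n-1)b})^{-1}$ and $c_2$ the exponent displayed in the theorem statement. That every node is visited infinitely often, hence $\tau_i(t)\to+\infty$ a.s., is already supplied by Theorem \ref{thm:theoremconsistencyB}.

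With these bounds and the matched growth conditions on $\alpha(t)$ in hand, the argument of Theorem \ref{thm:theorem} then runs verbatim: $\EX{U_2(t)}\to 0$ because its conditional norm is dominated by $t\alpha(t)^2\,\norm{\sum_{i,j\in\mathcal{V}}B_i\theta\theta^\top B_j}\,\PR{S(t)\neq\mathcal{V}}$; $\CEX{\frac{t}{\tau_i(t)}}{S(t)=\mathcal{V}}\to 1$ via the split at $b(t)=\sqrt{t}$, so $\EX{U_1(t)}\to\Sigma_c^{-1}\bigl(\sum_i B_i\bigr)\Sigma_c^{-1}=\Sigma_c^{-1}$; and $\EX{R_1(t)},\EX{R_2(t)}\to 0$ by Appendix \ref{ap:residuals}. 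Summing the four contributions yields $\lim_{t\to+\infty} t\,\EX{(\test(t)-\theta)(\test(t)-\theta)^\top}=\Sigma_c^{-1}$.

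I do not expect a real obstacle. The only point that needs care is the reduction step itself: one must verify that no part of the proof of Theorem \ref{thm:theorem} secretly relies on $p(t)$ being a \emph{time-homogeneous} Markov chain (a feature of Assumption \ref{asp:networkA} that is lost under \ref{asp:networkB}), as opposed to merely on the $\mathcal{M}_t$-measurability of $\tau_i(t)$ and $\IND{i\in S(t)}$ and on the noise/path independence. A quick pass confirms the conditional-expectation bookkeeping uses only the latter, so the reduction is clean and there is no genuinely new estimate to prove.
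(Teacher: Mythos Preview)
Your proposal is correct and follows essentially the same approach as the paper: both observe that the proof of Theorem \ref{thm:theorem} depends on the network model only through the tail bounds of Lemma \ref{lem:tailforvariables}, re-derive those bounds under Assumption \ref{asp:networkB} by combining Lemma \ref{lem:sequentiallyconnected_to} with Lemma \ref{lem:exponentiallemmanetworkB} (taking $m=(n-1)b$, $\epsilon=\delta^{(n-1)b}$), and then invoke Theorem \ref{thm:theorem}'s argument verbatim. Your version is more explicit about why the reduction is legitimate (noting that nothing in Theorem \ref{thm:theorem} uses time-homogeneity of $p(t)$, only $\mathcal{M}_t$-measurability and noise/path independence), which is a useful clarification but not a different route.
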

\begin{proof}
We note the conditions of lemma \ref{lem:exponentiallemmanetworkB} hold, with $m = (n-1)b$, and we can state the same result as in lemma \ref{lem:tailforvariables}, with $c_1 = (1 - \delta^{(n-1)b})$ and $c_2 =\frac{\log(1-\delta^{(n-1)b})}{(n-1)b}$. The proof then follows exactly in the same way as for theorem \ref{thm:theorem}. 
\end{proof}

We conclude this section by noting that theorems \ref{thm:theoremconsistencyB} and \ref{thm:theoremB} depended exclusively on the bound \eqref{eq:boundonT}, and the assumptions on the network instantiations and choice of function $Q$ are there to guarantee that the bound holds for some $m,\epsilon$. Thus, the results presented here have the following extension: suppose the sequence of nodes visited by the token $(p(t))_{t \geq 0}$ is such that we can show a bound of the type \eqref{eq:boundonT}. Then the estimate $s(t)$ is consistent and asymptotically optimal given an appropriate choice of $\alpha(t)$. 

\section{Future work}

As could be seen by the simulations, our proposed algorithm can quickly find a low error estimate with relatively few iterations. A weakness of our distributed approach is that it relies on only one estimate being transmitted in the network, and thus if the agent currently carrying this estimate fails, the estimation procedure stops. It would be interesting if we could generalize our methods to a setting where many tokens, carrying different estimates, are traveling in the network, which would add reliability to our algorithm while also taking advantage of the available communications in the network. From a theoretical point of view, we were able to reduce the condition of asymptotic optimality to a single sufficient condition, namely that of \eqref{eq:boundonT}, which lead to an exponential tail on a specific stopping time. However, it is readily seen that the exponential tail is not necessary for the asymptotic optimality of our algorithms, hence it is possible the condition \eqref{eq:boundonT} can be relaxed. Recent work \cite{sahu2018communication} showed how a novel \textit{consensus+innovations} algorithm can achieve asymptotic optimality when the number of communications increase sublinearly with $t$, and it would be interesting to see if we can decrease the communication rate of our algorithm while also guaranteeing asymptotic optimality. Finally, we focused exclusively on asymptotic properties of our estimate, and a different line of study would be the determination of its properties in a finite time window.


%

\appendices

\section{}
\label{ap:residuals}

In this appendix we show that the terms $R_1(t)$ and $R_2(t)$ as defined in equations \eqref{eq:secondterm} and \eqref{eq:fourthterm} respectively go in expectation to $0$. We will do this this by upper bounding their norms and showing that the expectation of the norm must go to $0$ from which follows that the quantities themselves must go to $0$.

Given a fixed path of the token, the expectation of term \eqref{eq:secondterm} equals
\begin{align*}
& \CEX{R_1(t)}{\mathcal{M}_t}  \\ 
& \quad = t \widetilde{K}(t)^{-1} \left( - \sum_{n=1}^n B_i \theta \IND{i \not \in S(t)} \right) \theta^\top \left( \widetilde{K}(t)^{-1}\Sigma_c - I \right)^\top,
\end{align*}
noting that all the random variables that are present in this last expression are $\mathcal{M}_t$-measurable. 
We have
\begin{align*}
\MoveEqLeft \norm{\CEX{R_1(t)}{\mathcal{M}_t}}  \\
& \leq t \norm{ \widetilde{K}(t)^{-1}} \norm{ \left( - \sum_{i=1}^n B_i \IND{i \not \in S(t)} \right)}  \\
& \qquad\qquad\qquad\qquad\qquad\norm{\left( \widetilde{K}(t)^{-1}\Sigma_c - I \right)} \norm{ \theta}^2 
\end{align*}
and now we can upper bound each of the norms. From the proof of theorem \ref{thm:theorem}, we know that $$\norm{\widetilde{K}(t)^{-1}} \leq \alpha(t)$$ and so $$\norm{\widetilde{K}(t)^{-1}\Sigma_c - I} \leq \norm{\widetilde{K}(t)^{-1}} \norm{\Sigma_c} + 1.$$ We can write
\begin{align*}
\norm{ \left( - \sum_{i=1}^n B_i \IND{i \not \in S(t)} \right)} \leq \norm{\Sigma_c}\IND{S(t) \neq \mathcal{V}}
\end{align*}
where the inequality follows since if some of the nodes have not yet been visited, then the norm of the sum $\sum_{i=1}^n B_i \IND{i \not \in S(t)}$ is upper bounded by the norm of $\Sigma_c$, and if all nodes have already been visited it is $0$. We conclude
\begin{align*}
\norm{\CEX{R_1}{\mathcal{M}_t}} \leq
t \alpha(t) \left( t \norm{\Sigma_c} + 1\right)\norm{\Sigma_c}\norm{\theta}^2 \IND{S_t \neq \mathcal{V}}
\end{align*}
and so
\begin{align}
\label{eq:firstresidual}
& \EX{\norm{\CEX{R_1}{\mathcal{M}_t}}}  \nonumber \\ 
& \qquad\leq t \alpha(t)\left( t \norm{\Sigma_c} + 1\right)\norm{\Sigma_c}\norm{\theta}^2 \PR{S_t \neq \mathcal{V}},
\end{align}
and given the exponential tail on $\PR{S_t \neq \mathcal{V}}$ we find that the quantity in \eqref{eq:firstresidual} goes to $0$.

Finally, for the quantity in \eqref{eq:fourthterm} we have 
\begin{align}
\EX{\norm{R_2(t)}} & = \CEX{\norm{R_2(t)}}{S(t) = \mathcal{V}}\PR{S(t) = \mathcal{V}}  \label{eq:secondresidual_1}\\
& \quad + \CEX{\norm{R_2(t)}}{S(t) \neq \mathcal{V}}\PR{S(t) \neq \mathcal{V}}. \label{eq:secondresidual_2}
\end{align}
Given the exponential tail on $\PR{S(t) \neq \mathcal{V}}$, we can use the bound developed above for $\norm{\widetilde{K}(t)^{-1}\Sigma_c - I}$ to conclude that the term \eqref{eq:secondresidual_2} converges to $0$. Regarding \eqref{eq:secondresidual_1}, we can note that if $S(t) = \mathcal{V}$, all nodes have been visited by time $t$, and then we can write
\begin{align*}
\widetilde{K}(t)^{-1}\Sigma_c - I & = \left( I\alpha(t)^{-1} + \Sigma_c \right)^{-1}\Sigma_c - I \\
&= \left( \Sigma_c^{-1} \alpha(t)^{-1} + I \right)^{-1} - I .
\end{align*}
This matrix is symmetric and thus its $2$-norm equals its spectral radius. Using property \eqref{eq:sumI} we can write its eigenvalues as
\begin{equation*}
\eig\left( \left(\frac{\Sigma_c^{-1}}{\alpha(t)} + I \right)^{-1} - I \right) = \left\{ \frac{-1}{1 + \lambda \alpha(t)} : ~\lambda \in \eig(\Sigma_c) \right\}
\end{equation*}
and thus
\begin{align*}
\norm{K(t)^{-1}\Sigma_c - I} \leq \frac{1}{1+\lambda_{min}(\Sigma_c) \alpha(t)},
\end{align*}
and note that $\lambda_{min}(\Sigma_c) > 0$ since the matrix is positive definite. Finally, we have
\begin{align*}
& \CEX{\norm{R_2(t)}}{S(t) = \mathcal{V}}  \\
& \leq t \norm{\left( \left(\frac{\Sigma_c^{-1}}{\alpha(t)} + I \right)^{-1} - I \right) \theta \theta^\top \left( \left(\frac{\Sigma_c^{-1}}{\alpha(t)} + I \right)^{-1} - I \right)}  \\
& \leq \frac{t}{(1+\lambda_{min}(\Sigma_c)\alpha(t))^2} \norm{\theta}^2
\end{align*}
and this last expression goes to $0$ asymptotically.



\ifCLASSOPTIONcaptionsoff
  \newpage
\fi



\bibliographystyle{IEEEtran}
\bibliography{IEEEabrv,./texfiles/papers}
\end{document}